\newtheorem{remark}{Remark}
\newtheorem{problem}{Problem}
\newtheorem{theo}{Theorem}
\newtheorem{lemma}{Lemma}
\newtheorem{claim}{Claim}
\newtheorem{assumption}{Assumption}
\definecolor{MyDarkBlue}{rgb}{0.1,0,0.65}
\definecolor{MyDarkRed}{rgb}{.85,0,0.1}
\newcommand{\card}[1]{\textrm{card}\left(#1\right)}
\newcommand{\EE}[1]{{\mathbb{E}}\left[ #1 \right]}
\newcommand{\dsp}{\displaystyle}
\newcommand{\ki}{_{k,i}}
\newcommand{\ku}{_{k,1}}
\newcommand{\kd}{_{k,2}}
\newcommand{\K}{^{(K)}}
\newcommand{\cK}{^{c,(K)}}
\newcommand{\SinfTwo}
{\eqref{eq:Q1d_2}-\eqref{eq:gamma1d_2}-\eqref{eq:gamma2d_2}}
\title{Nearly Optimal Resource Allocation for Downlink OFDMA in 2-D
Cellular Networks}
\author{Nassar Ksairi, Pascal Bianchi and Philippe Ciblat}
\begin{document}
\maketitle

\begin{abstract}
In this paper, we propose a resource allocation algorithm for the downlink of
sectorized two-dimensional (2-D) OFDMA cellular networks assuming statistical
Channel State Information (CSI) and fractional frequency reuse. 
The proposed algorithm can be implemented in a distributed fashion
without the need to any central controlling units. Its performance is analyzed
assuming fast fading Rayleigh channels and Gaussian distributed multicell
interference. We show that the transmit power of this simple algorithm tends, as
the number of users grows to infinity, to the same limit as the minimal power
required to satisfy all users' rate requirements \emph{i.e.,} the proposed
resource allocation algorithm is asymptotically optimal. As a byproduct of this
asymptotic analysis, we characterize a relevant value of the reuse factor that
only depends on an average state of the network. 
\end{abstract}


\section{Introduction}

We address in this work the problem of resource allocation (power control and
subcarrier assignment) for the downlink of sectorized OFDMA networks impaired
with multicell interference. A considerable research interest has been lately
dedicated to this problem since the adoption of OFDMA in a number of current and
future wireless standards such as WiMax and 3GPP-LTE. In principle, the problem
of resource allocation should be jointly solved in all the cells of the system.
In most of the practical situations, this optimization problem is difficult to
solve. Therefore, most of the related works in the literature focus on the
single cell case
(\emph{e.g.,}~\cite{ofdma_duality_gap}-\cite{ofdma_individual1}). 
Fewer works address the more involved multicell allocation problem. In this
context, we cite~\cite{scaling_laws}-\cite{imp_reuse} in the case of perfect CSI
at the transmitters side, and~\cite{pischella_mimo_ofdma,gau-hac-cib-1} in the
case of imperfect CSI. In~\cite{pischella_mimo_ofdma,gau-hac-cib-1}, \emph{all}
the available subcarriers are likely to be used by different base stations and
are thus subject to multicell interference. In such a configuration,
interference may reach excessive levels, especially for users located at cells
borders.

Similarly to~\cite{pischella_mimo_ofdma,gau-hac-cib-1}, we assume in this paper
that users' channels undergo fast fading and that the CSI at the base stations
is limited to some channel statistics. However, contrary to these two works, we
consider that a certain subset of subcarriers is shared orthogonally between the
adjacent base stations (and is thus ``protected'' from multicell interference)
while the remaining subcarriers are ``non protected'' since they are reused by
different base stations. This so-called \emph{fractional frequency reuse} (or
FFR) is recommended in a number of standards \emph{e.g.,} in~\cite{wimax2} for
IEEE 802.16 (WiMax)~\cite{wimax1}, as a way to avoid severe inter-cell
interference. The ratio between the number of non protected subcarriers and the
total number of subcarriers is generally referred to as the \emph{reuse factor}
and is denoted in the sequel by~$\alpha$.

Few works in the literature (we cite~\cite{ffr_mimo,ffr_self,ffr_suppression}
without being exclusive) have addressed the problem of resource allocation for
FFR-based OFDMA networks, and none of them fits into the above framework which
is considered in this paper. The particular problem considered
in~\cite{ffr_mimo} consists in maximizing a system-wide utility function under a
power constraint. In this context, the authors propose a distributed iterative
allocation algorithm that is based on estimating the level of multicell
interference rather than computing it. Of course, resource allocation schemes
that do not resort to such simplifications are highly preferable. In the same
context, authors of~\cite{ffr_self} consider the problem of minimizing the total
transmit power needed to satisfy all users' rate requirements. For that sake,
they propose a heuristic allocation algorithm without any assessment of its
deviation from the optimal solution to the latter problem. Moreover, the
selection of a relevant reuse factor is not addressed. Finally, authors
of~\cite{ffr_suppression} assume that subcarrier assignment is done separately
and in advance. The major drawback of this work is thus that joint power control
and subcarrier assignment is not addressed.

In our work, we investigate the problem of power control and subcarrier
assignment for the downlink of FFR-based OFDMA systems allowing to satisfy all
users' rate requirements while spending the least possible power at the
transmitters' side. In our previous work~\cite{tsp1,tsp2}, the solution to this
problem is characterized in the special case of one-dimensional (1-D) cellular
networks where all users and base stations are located on a line. Unfortunately,
it is much more difficult to characterize this solution in the case of 2-D
networks. In the present work, our aim is to propose a suboptimal resource
allocation strategy for these 2-D networks and to study its performance with
respect to the above optimization problem. Our allocation algorithm assumes that
users of each cell are divided prior to resource allocation into two groups
separated by a fixed curve. The first group is composed of closer users to the
base station. These users are constrained to modulate non protected subcarriers
and are thus subject to multicell interference. The second group comprises the
farthest users who are constrained to modulate interference-free subcarriers. In
order to relevantly select the aforementioned separating curves, we study the
limit of an {\bf optimal} solution to the resource allocation problem as the
number of users grows to infinity. We show that if the curves are set using the
results of this asymptotic analysis, then the limit of the transmit power of the
proposed \emph{suboptimal} algorithm is equal to the limit of the transmit power
of the \emph{optimal} resource allocation. As a byproduct, we are able to
determine a relevant value of the reuse factor. Indeed, the asymptotic transmit
power depends on the average rate requirement and on the density of users in
each cell. It also depends on the value~$\alpha$ of the frequency reuse factor.
We can therefore define the optimal reuse factor as the value of~$\alpha$ which
minimizes this asymptotic power. The main contributions of this work are thus
the following.
\begin{enumerate}
\item  A practical resource allocation algorithm that can be implemented in a
distributed manner is proposed for the downlink of a sectorized OFDMA network
assuming fractional frequency reuse and statistical CSI. The transmit power of
this simple algorithm tends, as the number of users grows to infinity, to the
same limit as the minimal power required to satisfy all users' rate
requirements.
\item As a byproduct of our study of the above algorithm, we prove that the
simple scheme consisting in separating users of each cell beforehand into
interference-free users (constrained to modulate only non reusable subcarriers)
and interference users (constrained to modulate only reusable subcarriers) is
asymptotically optimal. This scheme is frequently used in cellular systems, but
it has never been proved optimal in any sense to the best of our knowledge.
\item Finally, a method is proposed to select a relevant value of the reuse
factor. The determination of this factor is of great importance for the
dimensioning of wireless networks. 
\end{enumerate}

The rest of this paper is organized as follows. The system model is introduced
in Section~\ref{sec:sys_model}, followed by a description of the multicell
resource allocation problem in Section~\ref{sec:joint_2d}. The proposed resource
allocation algorithm is presented in Section~\ref{sec:sub_optimal_2d}. The
relevant choice of the curves associated with this algorithm and which separate
the two groups of users in each cell is addressed in
Section~\ref{sec:asym_analysis_2d}. Next, the relevant selection of the reuse
factor is addressed in Subsection~\ref{sec:alpha}. Finally, The relevancy of the
proposed resource allocation and of our selection of the reuse factor are
sustained by simulations in Section~\ref{sec:simus}.

\section{System Model}
\label{sec:sys_model}

Consider the downlink of a sectorized OFDMA cellular network composed of
hexagonal cells. Each cell in the system is divided into three $120\,^{\circ}$
sectors. In this paper, we restrict ourselves to the case of three interfering
sectors of three adjacent cells, say cells~$A,B,C$ (see
Figure~\ref{fig:3cells_2D_model}). In the more general case of networks with
more than three cells, our results hold provided that the interference generated
by farther base stations can be neglected. Generally, this assumption is only
valid as a first approximation. However, it allows for an essential reduction of
the dimensionality of the multicell resource allocation problem. In the sequel,
we assume that the considered sectors of cells~$A,B,C$ have the same surface and
we denote by $K^A,K^B,K^C$ their respective number of users. Let
$K=K^A+K^B+K^C$~be the total number of users and $N$ the total number of
available subcarriers. The signal received by user~$k$ in cell~$c$
($c\in\{A,B,C\}$) at subcarrier~$n\in\{0 \ldots N-1\}$ during the $m$th
OFDM block is given by \begin{equation}\label{eq:signal_model}
y_k(n,m) = H_k^c(n,m) s_k(n,m) + w_k(n,m)\:,
\end{equation} 
where $s_k(n,m)$ represents the data symbol destined to user~$k$, and where
$w_k(n,m)$ is a random process that encompasses both the thermal noise and
the possible multicell interference. Random variable $H_k^c(n,m)$ stands for the
frequency-domain channel coefficient associated with user~$k$ in cell~$c$ at the
$n$th subcarrier and the $m$th OFDM block. The realizations of this random
variable are assumed to be known only at the receiver side and unknown at the
base station. Random variables $\{H_k^c(n,m)\}_{n,m}$ are Rayleigh distributed
with variance $\rho_k=\mathbb{E}[|H_k^c(m,n)|^2]$ which is assumed to be
constant w.r.t~$n$ and~$m$. This holds for example in the case of uncorrelated
time-domain channel coefficients. Furthermore, for each $n\in\{0
\ldots N-1\}$, random process $\{H_k^c$ $(n$, $m)\}_m$ is assumed to be ergodic.
Finally, variance $\rho_k$ is assumed to be known at the transmitter side and
{\bf vanishes with the distance between base station~$c$ and user~$k$ following
a given path loss model}. We assume that fractional frequency reuse is applied.
According to this scheme (see Figure~\ref{fig:3cells_2D_model}), a certain
subset of subcarriers $\mathcal{I} \subset \{0 \ldots N -1\}$ is reused in
the three cells. If user~$k$ modulates a subcarrier $n \in \mathcal{I}$, the
noise $w_k(n,m)$ includes both thermal noise and multicell interference.
The reuse factor $\alpha$ is the ratio between the number of reused subcarriers
and the total number of subcarriers:
\begin{equation*}
\alpha=\frac{\mbox{card}(\mathcal{I})}{N}\:.
\end{equation*}
The remaining $(1-\alpha)N$ subcarriers are shared by the three sectors in an
orthogonal way, such that each base station~$c$ ($c=A,B,C$) has at its disposal
a subset $\mathcal{P}_c$ of cardinality $\frac{1-\alpha}{3}N$. If user~$k$
modulates a subcarrier $n \in \mathcal{P}_c$, process $w_k(n,m)$ will contain
only thermal noise with variance $\sigma^2$. Finally, 
$\mathcal{I} \cup \mathcal{P}_A \cup \mathcal{P}_B \cup \mathcal{P}_C = 
\{0, 1, \ldots ,N-1\}$. Denote by $\mathcal{N}_k$ the subset of subcarriers
assigned to user~$k$. We assume that $\mathcal{N}_k$ may contain subcarriers
from both the ``interference'' subset $\mathcal{I}$ and the ``protected'' subset
$\mathcal{P}_c$. Denote by $\gamma\ku^c N$ (resp. $\gamma\kd^c N$) the number
of subcarriers assigned to user~$k$ in $\mathcal{I}$ (resp. $\mathcal{P}_c$).
In other words,
\begin{equation*}
\gamma\ku^c = \mbox{card}(\mathcal{I} \cap \mathcal{N}_k)/N \qquad 
\gamma\kd^c = \mbox{card}(\mathcal{P}_c \cap \mathcal{N}_k)/N\:.
\end{equation*}
Parameters $\gamma\ku^c$ and $\gamma\kd^c$ are generally referred to as
\emph{sharing factors}. We assume from now on that they can take on any value in
the interval $[0,1]$ (not necessarily integer multiples of $1/N$).
\begin{remark}
Even when the sharing factors are not integer multiples of $1/N$, it is still
possible to practically achieve the exact values of $\gamma\ku^c$, $\gamma\kd^c$
by simply exploiting the time dimension. Indeed, the number of subcarriers
assigned to user~$k$ can be chosen to vary from one OFDM symbol to another in
such a way that the \emph{average} number of subcarriers in subsets
$\mathcal{I}$ and $\mathcal{P}_c$ is equal to $\gamma\ku^c N$ and
$\gamma\kd^c N$ respectively. Thus the fact that $\gamma\ku^c$, $\gamma\kd^c$
are not strictly integer multiples of $1/N$ is not restrictive, provided that
the system is able to grasp the benefits of the time dimension. The particular
case where the number of subcarriers is restricted to be the same in each OFDM
block is addressed in Section~\ref{sec:simus}.
\end{remark}
Note that by definition
$$
\sum_{k=1}^{K^c}\gamma\ku^c \leq \alpha\:, \qquad \sum_{k=1}^{K^c}\gamma\kd^c
\leq \frac{1-\alpha}{3}\:.
$$ 
For the sake of readability and compactness of the paper, the above two
inequality constraints will be written from now on as equalities \emph{i.e.,} we
force the whole set of available subcarriers to be fully occupied by setting
$\sum_{k=1}^{K^c}\gamma\ku^c = \alpha$ and 
$\sum_{k=1}^{K^c}\gamma\kd^c = \frac{1-\alpha}{3}$. Indeed, keeping the above
constraints as inequalities would make the presentation of the final results as
well as of the proofs very tedious.

Recall that in our model, for each user~$k$ in any cell~$c$, all channel
coefficients $H_k^c(n,m)$ are identically distributed on all the subcarriers
assigned to this user (the variance $\rho_k=\mathbb{E}[|H_k^c(m,n)|^2]$ is
assumed to be constant w.r.t~$n$). It is thus reasonable to assume that the base
station modulates the subcarriers of each user in each one of the two subsets
($\cal I$ and $\mathcal{P}_c$) with the same transmit power. Define $P\ku^c$
(resp. $P\kd^c$) as the power transmitted on the subcarriers assigned to
user~$k$ in $\mathcal{I}$ (resp. in $\mathcal{P}_c$) \emph{i.e.,} $P\ku^c =
E[|s_k(n,m)|^2]$ if $n\in \mathcal{I}$, $P\kd^c = E[|s_k(n,m)|^2]$ if $n\in
\mathcal{P}_c$. Parameters $\{\gamma\ki^c,P\ki^c\}_{i=1,2}$ will be designated
in the sequel as the \emph{resource allocation parameters}. We now describe the
adopted model for the multicell interference. Consider one of the non protected
subcarriers~$n$ assigned to user~$k$ of cell~$A$ in subset~$\mathcal{I}$. Denote
by~$\sigma_k^2$ the variance of the additive noise process~$w_k(n,m)$. This
variance is assumed to be constant w.r.t both~$n$ and~$m$. It depends only on
the position of user~$k$ and the average powers
$Q_1^B=\sum_{k=1}^{K^B}\gamma\ku^B P\ku^B$ and
$Q_1^C=\sum_{k=1}^{K^C}\gamma\ku^C P\ku^C$ transmitted respectively by base
stations~$B$ and~$C$ on the subcarriers of~$\cal I$. This assumption is valid
for instance in OFDMA systems that utilize \emph{random subcarrier
assignment}~\cite{random_SAS}. According to this subcarrier assignment scheme,
each user~$k$ is assigned a subset~$\mathcal{N}_k$ that is composed by
\emph{randomly} selecting $\card{\mathcal{N}_k}$ subcarriers out of the total
$N$ available subcarriers. Finally, let $\sigma^2$ designate the variance of the
thermal noise. Putting all pieces together:
\begin{equation}\label{eq:2d_multicell_interference}
\mathbb{E}\left[|w_k(n,m)|^2\right]=
\left\{
\begin{array}{ll}
\sigma^2 &\mbox{if } n\in \mathcal{P}_c\\
\sigma_k^2=\sigma^2+\sum_{\tilde{c}=B,C}\mathbb{E}\left[
|H_k^{\tilde{c}}(n,m)|^2\right] Q_1^{\tilde{c}} &\mbox{if } n\in\mathcal{I}
\end{array}
\right.
\end{equation}
where $H_k^{\tilde{c}}(n,m)$ ($\tilde{c}=B,C$) represents the channel between
base station~$\tilde{c}$ and user~$k$ in cell~$c$ on subcarrier~$n$ and OFDM
block~$m$. Of course, the average channel gain
$\mathbb{E}\left[|H_k^{\tilde{c}}(n,m)|^2\right]$ depends on the position of
user~$k$ via the path loss model. For instance, if two users~$k$ and~$l$ of
cell~$A$ are located on the same line perpendicular to the axis $BC$ such
that~$k$ is closer to base station~$A$, then $\sigma_k^2\leq \sigma_l^2$.

\section{Joint Resource Allocation Problem}
\label{sec:joint_2d}

Assume that each user~$k$ has a rate requirement of $R_k$ nats/s/Hz. Consider
the problem of determination of the resource allocation parameters for the three
interfering sectors. These parameters must be selected such that the target rate
of each user is satisfied and such that the power spent by the three base
stations is minimized. Due to the ergodicity of the process $\{H_k^c(n,m)\}_m$
for each subcarrier~$n$, the rate $R_k$ can be satisfied provided that it is
smaller than the ergodic capacity $C_k$ associated with user~$k$. Unfortunately,
the exact expression of $C_k$ is difficult to obtain due to the fact that the
noise-plus-interference $\{w_k(n,m)\}_{n,m}$ is not a Gaussian process
in general. Nonetheless, if we endow the input symbols $s_k(n,m)$ with Gaussian
distribution, the mutual information between $s_k(n,m)$ and the received signal
$y_k(n,m)$ in~\eqref{eq:signal_model} is minimal when $w_k(n,m)$ is Gaussian
distributed. Therefore, we approximate in the sequel the multicell interference
by a Gaussian process as this approximation provides a lower bound on the
mutual information. Focus on cell~$A$ and denote by $g_{k,1}(Q_1^B,Q_1^C)$,
$g_{k,2}$ the channel Gain-to-Interference-plus-Noise Ratio (GINR) and
Gain-to-Noise Ratio (GNR) associated with user~$k$ on the subcarriers of subset
$\mathcal{I}$ and $\mathcal{P}_A$ respectively:
$$
g\ku(Q_1^B,Q_1^C) = \frac{\rho_k}{\sigma_k^2}\:, \qquad
g\kd = \frac{\rho_k}{\sigma^2}\:.
$$
The ergodic capacity $C_k$ associated with user~$k$ in cell~$A$ is equal to the
sum of the ergodic capacities corresponding to both subsets~$\cal I$ and~${\cal
P}_A$. For instance, the part of the capacity corresponding to the protected
subset~${\cal P}_A$ is equal to
$\gamma\kd^A\EE{\log\left(1+P\kd^A\frac{|H_k^{A}(n,m)|^2}{\sigma^2}\right)}$,
where factor $\gamma\kd^A$ traduces the fact that the capacity increases with
the number of subcarriers which are modulated by user~$k$. In the latter
expression, the expectation is calculated w.r.t random variable
$\frac{|H_k^{A}(m,n)|^2}{\sigma^2}$. Now, $\frac{H_k^{A}(m,n)|^2}{\sigma^2}$ has
the same distribution as $\frac{\rho_k}{\sigma^2}Z=g\kd Z$, where $Z$ follows a
standard unit-variance exponential distribution. Finally, the ergodic capacity
$C_k=C_k(\gamma\ku^A,\gamma\kd^A,P\ku^A,P\kd^A,Q_1^B,Q_1^C)$ in the whole
bandwidth is equal to
\begin{equation}\label{eq:2d_ergodic_capacity}
C_k=\gamma\ku^A\mathbb{E}\left[\log\left(1+g\ku(Q_1^B,Q_1^C)
P\ku^A Z\right)\right]
+\gamma\kd^A\mathbb{E}\left[\log\left(1+g\kd P\kd^A Z\right)\right]\:.
\end{equation}
Capacity~$C_k$ is achieved if we endow the input symbols $s_{k}(n,m)$
with Gaussian distribution. This distribution is assumed from now on. Moreover,
note that $C_k$ does not depend on the particular subcarriers $\mathcal{N}_k$
assigned to user~$k$,  but rather on the number of these subcarriers via
parameters $\gamma\ku^A$ and $\gamma\kd^A$. Therefore, choosing some specific
subcarriers rather than others has no effect on the capacity. The subcarriers
assignment scheme reduces thus to the determination of the sharing factors
$\gamma\ku^A$, $\gamma\kd^A$. Finally, the multicell resource allocation problem
can be defined as follows.
\begin{problem}
\label{prob:2d_multi}
Minimize the power spent by the three base stations 
$Q=$ $\dsp \sum_{c=A,B,C}\sum_{k=1}^{K^c}(\gamma\ku^c P\ku^c+
\gamma\kd^c P\kd^c)$ w.r.t
$\{\gamma\ku^c, \gamma\kd^c, P\ku^c, P\kd^c\}_{\substack{c=A,B,C\\ k=1\ldots
K^c}}$ under the following constraints:
\begin{align*}
\mathbf{C1:}\: &\forall k, C_k(\gamma\ku^c,\gamma\kd^c,P\ku^c,P\kd^c)\geq R_k &
&\mathbf{C3:}\:
\sum_{k=1}^{K^c}\gamma\kd^c=\frac{1-\alpha}{3}\\
\mathbf{C2:}\: &\sum_{k=1}^{K^c}\gamma\ku^c=\alpha & &\mathbf{C4:}\:
\forall k,\: \gamma\ki^c,P\ki^c\geq0\: (i=1,2)\:.
\end{align*}
\end{problem}
As a matter of fact, Problem~\ref{prob:2d_multi} cannot be solved using convex
optimization tools. Anyhow, even if we were able to propose a method to solve
this problem (as we did in~\cite{tsp1} in the case of 1-D networks), such a
method would be very costly in term of computational complexity. It is therefore
of interest to propose practical allocation algorithms that provides suboptimal
solutions to Problem~\ref{prob:2d_multi}.


\section{Proposed Resource Allocation Algorithm}
\label{sec:sub_optimal_2d}

In~\cite{tsp2}, we showed that in 1-D cellular networks, any global
solution to Problem~\ref{prob:2d_multi} has the following \emph{asymptotic}
property: The power allocated to users who modulate \emph{both} protected and
non protected subcarriers becomes negligible as the number $K$ of users
increases. One can thus suggest the suboptimal (w.r.t
Problem~\ref{prob:2d_multi}) resource allocation algorithm given below.
For a given user $k$ in cell~$c$, we denote by $(x_k,y_k)$ his/her position in
the Cartesian coordinate system associated with this cell (see
Figure~\ref{fig:subopt_2d}). In our algorithm, we use a continuous
function $d_{\textrm{subopt}}^c(.)$ on $[-D,D]$ (where $D$ stands for the radius
of the cell as shown in Figure~\ref{fig:subopt_2d}) to define a curve that
separates the users of each cell~$c$ into two subsets. The first subset
$\mathcal{K}_I^c$ contains the users who are closer to the base station than
this curve. These users are constrained to modulate only non protected
subcarriers~$\cal I$. The second subset $\mathcal{K}_P^c$ contains the rest of
users who are constrained to the protected subcarriers~$\mathcal{P}_c$:
$$
\mathcal{K}_I^c=\{k\in\{1 \ldots K^c\}\:|\: y_k\leq
d_{\textrm{subopt}}^c(x_k)\}\:,\qquad
\mathcal{K}_P^c=\{k\in\{1 \ldots K^c\}\:|\: y_k>
d_{\textrm{subopt}}^c(x_k)\}\:.
$$
Note that $\{d_{\textrm{subopt}}^c(.)\}_{c=A,B,C}$ are fixed prior to
resource allocation. Relevant selection of these curves is postponed
to Subsection~\ref{sec:selection_d}. It merely relies on the asymptotic
analysis carried out in Subsection~\ref{sec:optimal_asyp_analysis_2d_aligned}.
\subsection{Resource Allocation for Interfering Users
$\boldsymbol{\{{\cal K}^c_I\}_{c=A,B,C}}$}
\label{sec:alloc_for_inter_2d}

For users ${\cal K}^c_I$ in each cell~$c$, resource allocation parameters in the
protected subset $\mathcal{P}_c$ are arbitrarily set to zero \emph{i.e.,}
$\gamma\kd^c=P\kd^c=0$. Recall the definition of 
$Q_1^c = \sum_{k\in{\cal K}_I^c}\gamma\ku^cP\ku^c$ as the average power
transmitted by base station~$c$ ($c=A,B,C$) in the interference subset $\cal I$.
For each cell~$c$, denote by $\bar{c}$ and $\bar{\bar{c}}$ the other two cells.
For example, $\bar{A}=B$ and $\bar{\bar{A}}=C$. Define $C_k($ $\gamma\ku^c$,
$P\ku^c$, $Q_1^{\bar{c}}$, $Q_1^{\bar{\bar{c}}})$ as the ergodic capacity
associated with user~$k$ obtained by plugging $\gamma\kd^c=P\kd^c=0$
into~\eqref{eq:2d_ergodic_capacity}. Parameters $\gamma\ku^c, P\ku^c$ for users
in $\{{\cal K}_I^c\}_{c=A,B,C}$ can be obtained as the solution to the following
multicell allocation problem.
\begin{problem}
\label{prob:opt_multi_2d}
{\bf [Multicell problem in band $\boldsymbol{\mathcal{I}}$]} 
Minimize the total transmit power $\dsp \sum_{c=A,B,C} \sum_{k\in {\cal K}_I^c}
\gamma\ku^c P\ku^c$ w.r.t. $\{\gamma\ku^c$, $P\ku^c\}_{\stackrel{c=A,B,C}{k= 1
\ldots K^c}}$ under the following constraints:
\begin{align*}
&\mathbf{C1:}\: \forall c, \; \forall k\in{\cal K}_I^c, 
R_k\leq C_k(\gamma\ku^c,P\ku^c,Q_1^{\bar{c}},Q_1^{\bar{\bar{c}}})\\
&\mathbf{C2:}\: \forall c, \; \sum_{k\in{\cal K}_I^c} \gamma\ku^c= \alpha\qquad
\mathbf{C3:}\: \forall c, \forall k\in{\cal K}_I^c,\: \gamma\ku^c,
P\ku^c\geq0\:.
\end{align*}
\end{problem}
\begin{remark}
\label{rem:feasibility}
Problem~\ref{prob:opt_multi_2d} may not be always feasible. Indeed, since the
protected subcarriers are forbidden to users $\mathcal{K}_I^c$, the multicell
interference may in some cases reach excessive levels and prevent some users
from satisfying their rate requirements. Fortunately, we will see that if curves
$\{d_{\textrm{subopt}}^c(.)\}_{c=A,B,C}$ are relevantly chosen, then the latter
problem is feasible, at least for a sufficiently large number of users. 
\end{remark}
One can use an approach similar to~\cite{tsp1,tsp2} to show that any global
solution to the above problem satisfies the following property.
There exist six positive numbers $\{\beta_1^c, Q_1^c\}_{c=A,B,C}$ (where
$\beta_1^c$ is the Lagrange multiplier associated with constraint $\mathbf{C2}$
of Problem~\ref{prob:opt_multi_2d}) such that:
\begin{align}
& P\ku^c = \left[g\ku(Q_1^{\bar{c}}, Q_1^{\bar{\bar{c}}})\right]^{-1}
f^{-1}(g\ku(Q_1^{\bar{c}}, Q_1^{\bar{\bar{c}}}) \beta_1^c)
\label{eq:resSimpa_2d}\\
& \gamma\ku^c = \frac{R_k}{C\left(
g\ku(Q_1^{\bar{c}}, Q_1^{\bar{\bar{c}}})
\beta_1^c\right)}\:,\label{eq:resSimpb_2d}
\end{align}
where $f(.)$ and $C(.)$ are increasing functions defined on ${\mathbb R}_+$ by  
\begin{equation}
f(x) = \frac{\EE{\log(1+xZ)}}{\EE{\frac{Z}{1+xZ}}}-x\:,\qquad
C(x)=\mathbb{E}[\log(1+f^{-1}(x)Z)]\:,\label{eq:fC_fun}
\end{equation}
$f^{-1}(.)$ being the inverse on $\mathbb{R}_+$ of $f(.)$ w.r.t the composition
of functions, and where for each $c=A,B,C$ and for a fixed value of
$Q_1^{\bar{c}}$ and $Q_1^{\bar{\bar{c}}}$, $(\beta_1^c, Q_1^c)$ is the unique
solution to the following system of equation:
\begin{align}
& \sum_{k\in{\cal K}_I^c} \frac{R_k}{
C(g\ku(Q_1^{\bar c}, Q_1^{\bar{\bar c}})
\beta_1^c)} = \alpha\:,\label{eq:simpa_2d}\\
& Q_1^c = \sum_{k\in{\cal K}_I^c} R_k
\frac{\left[g\ku(Q_1^{\bar c}, Q_1^{\bar{\bar c}})\right]^{-1}
f^{-1}(g\ku(Q_1^{\bar c}, Q_1^{\bar{\bar c}}) \beta_1^c)}
{C(g\ku(Q_1^{\bar{c}}, Q_1^{{\bar{\bar c}}})\beta_1^c)}\:. 
\label{eq:simpb_2d}
\end{align}
Note that equation~\eqref{eq:simpa_2d} is equivalent to the constraint 
$\bf C2$: $\sum_k \gamma\ku^c$ $=$ $\alpha$, while equation~\eqref{eq:simpb_2d}
is nothing else than the definition of the average power $Q_1^c =
\sum_{k\in{\cal K}_I^c} \gamma\ku^cP\ku^c$ transmitted by base station~$c$ in
subset~$\mathcal{I}$. We now prove that when Problem~\ref{prob:opt_multi_2d} is
feasible, then the system of six equations
(\ref{eq:simpa_2d})-(\ref{eq:simpb_2d}) for $c=A,B,C$ admits a unique solution
$\beta_1^A$, $Q_1^A$, $\beta_1^B$, $Q_1^B,\beta_1^C$, $Q_1^C$ and that this
solution can be
obtained by a simple iterative algorithm. Focus on a given cell~$c$ ($c=A,B,C$)
and consider any fixed values  $Q_1^{\bar c}$, $Q_1^{\bar{\bar c}}$. Denote by 
$I^c\left(Q_1^{\bar c},Q_1^{\bar{\bar c}}\right)$ the rhs of~\eqref{eq:simpb_2d}
\emph{i.e.,}
$$
I^c\left(Q_1^{\bar c},Q_1^{\bar{\bar c}}\right) = \sum_{k\in{\cal K}_I^c} R_k
\frac{\left[g\ku(Q_1^{\bar c}, Q_1^{\bar{\bar c}})\right]^{-1}
f^{-1}(g\ku(Q_1^{\bar c}, Q_1^{\bar{\bar c}}) \beta_1^c)}
{C(g\ku(Q_1^{\bar{c}}, Q_1^{{\bar{\bar c}}})\beta_1^c)}\:,
$$
where $\beta_1^c$ is defined as the unique solution to~\eqref{eq:simpa_2d}. The
value $I^c\left(Q_1^{\bar c},Q_1^{\bar{\bar c}}\right)$ can be seen as the
minimum power that should be spent by base station~$c$ on the interference
subcarriers~$\cal I$ when the interference produced by base stations~${\bar c}$
and~$\bar{\bar c}$ is equal to $Q_1^{\bar c}$ and $Q_1^{\bar{\bar c}}$,
respectively. Since~(8) should be satisfied for $c=A$, $c=B$ and~$c=C$, the
following three equations hold
\begin{equation*}
Q_1^A=I^A(Q_1^B,Q_1^C),\quad Q_1^B=I^B(Q_1^A,Q_1^C),\quad
Q_1^C=I^C(Q_1^A,Q_1^B)\:.
\end{equation*}
The triple $(Q_1^A, Q_1^B,Q_1^C)$ is therefore clearly a fixed point of the
vector-valued function ${\bf I}(Q_1^A, Q_1^B,Q_1^C)$ $=$
$\left(I^A(Q_1^B,Q_1^C), I^B(Q_1^A,Q_1^C),I^C(Q_1^A,Q_1^B)\right)$:
\begin{equation}\label{eq:fixedtilde}
(Q_1^A, Q_1^B,Q_1^C) = {\bf I}(Q_1^A, Q_1^B, Q_1^C)\:.   
\end{equation}
As a matter of fact, it can be shown that such a fixed point of ${\bf I}$ is
unique. This claim can be proved using the following lemma. 
\begin{lemma}
\label{lem:yates}  
Function ${\bf I}$ is such that the following properties hold.
  \begin{enumerate}
  \item Positivity: ${\mathbf{I}}(Q^A,Q^B,Q^C)>0$.
  \item Monotonicity: If $Q^A\geq {Q^A}', Q^B\geq{Q^B}', Q^C\geq{Q^C}'$, then 
  ${\mathbf{I}}(Q^A,Q^B,Q^C)\geq {\mathbf{I}}({Q^A}',{Q^B}',{Q^C}')$.
  \item Scalability: for all $t>1$,
  $t {\mathbf{I}}(Q^A,Q^B,Q^C)>{\mathbf{I}}(tQ^A,tQ^B,tQ^C)$.
  \end{enumerate}
\end{lemma}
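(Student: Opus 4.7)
The plan is to interpret $I^c(Q_1^{\bar c}, Q_1^{\bar{\bar c}})$ as the \emph{value function} of the single-cell optimization problem obtained by freezing the interference from the other two base stations. Concretely, for fixed $Q_1^{\bar c}, Q_1^{\bar{\bar c}}$, let $(\mathcal S_c)$ denote the problem of minimizing $\sum_{k\in\mathcal K_I^c} \gamma\ku^c P\ku^c$ subject to the rate constraints $C_k(\gamma\ku^c, P\ku^c, Q_1^{\bar c}, Q_1^{\bar{\bar c}}) \geq R_k$, the sharing constraint $\sum_k \gamma\ku^c = \alpha$, and nonnegativity. Equations \eqref{eq:simpa_2d}--\eqref{eq:simpb_2d} are precisely the KKT conditions of $(\mathcal S_c)$, so $I^c$ coincides with its optimum value. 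From this viewpoint the three axioms become transparent.

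Positivity is immediate: since $C(\cdot)$ is a continuous increasing bijection from $\mathbb R_+$ onto $\mathbb R_+$, \eqref{eq:simpa_2d} has a unique strictly positive solution $\beta_1^c$; combined with $g\ku > 0$, $f^{-1}(g\ku \beta_1^c) > 0$ and $C(g\ku \beta_1^c) > 0$, every summand of $I^c$ is strictly positive. Applied componentwise, this yields $\mathbf I > 0$.

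For monotonicity I exploit the variational characterization. From \eqref{eq:2d_multicell_interference}, $\sigma_k^2$ is nondecreasing in each of $Q_1^{\bar c}, Q_1^{\bar{\bar c}}$, so $g\ku = \rho_k/\sigma_k^2$ is nonincreasing; consequently $C_k(\gamma\ku^c, P\ku^c, \cdot, \cdot)$, which by \eqref{eq:2d_ergodic_capacity} equals $\gamma\ku^c \, \mathbb E[\log(1 + g\ku P\ku^c Z)]$, is nonincreasing in the interference powers for every fixed $(\gamma\ku^c, P\ku^c)$. Passing from $({Q^B}',{Q^C}')$ to a componentwise-larger $(Q^B,Q^C)$ only tightens every rate constraint, so the feasible set of $(\mathcal S_A)$ shrinks and its optimum cannot decrease; the same holds for $I^B$ and $I^C$, proving the componentwise monotonicity of $\mathbf I$.

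For scalability, let $(\gamma_k^\star, P_k^\star)$ be an optimizer of $(\mathcal S_c)$ at $(Q_1^{\bar c}, Q_1^{\bar{\bar c}})$, so $I^c = \sum_k \gamma_k^\star P_k^\star$ with each rate constraint active. For $t>1$, consider $(\gamma_k^\star, t P_k^\star)$ evaluated against the scaled interference $(tQ_1^{\bar c}, tQ_1^{\bar{\bar c}})$. Using \eqref{eq:2d_multicell_interference},
\[
g\ku(tQ_1^{\bar c}, tQ_1^{\bar{\bar c}}) \cdot t P_k^\star = \frac{\rho_k P_k^\star}{\sigma^2/t + (\sigma_k^2-\sigma^2)} > \frac{\rho_k P_k^\star}{\sigma_k^2} = g\ku(Q_1^{\bar c}, Q_1^{\bar{\bar c}}) P_k^\star,
\]
the strict inequality coming from $t>1$ together with the fact that the thermal-noise floor $\sigma^2>0$ does \emph{not} scale with $t$. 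Hence $C_k(\gamma_k^\star, t P_k^\star, tQ_1^{\bar c}, tQ_1^{\bar{\bar c}}) > R_k$, and by continuity and strict monotonicity of $C_k$ in $P\ku^c$ one can lower each $tP_k^\star$ to a strictly smaller $\tilde P_k > 0$ that restores equality. Since $\gamma_k^\star$ is untouched, the sharing constraint still holds, so $(\gamma_k^\star,\tilde P_k)$ is feasible for $(\mathcal S_c)$ at the scaled interference with total power $\sum_k \gamma_k^\star \tilde P_k < t \sum_k \gamma_k^\star P_k^\star = t I^c(Q_1^{\bar c}, Q_1^{\bar{\bar c}})$, whence $I^c(tQ_1^{\bar c}, tQ_1^{\bar{\bar c}}) < t I^c(Q_1^{\bar c}, Q_1^{\bar{\bar c}})$. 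Componentwise, this is the scalability of $\mathbf I$. The main obstacle is conceptual, namely recognizing the variational interpretation of $I^c$ rather than trying to differentiate the implicit system \eqref{eq:simpa_2d}--\eqref{eq:simpb_2d}; once this is in hand, the three axioms reduce to short manipulations leveraging the unscaled noise floor $\sigma^2$.
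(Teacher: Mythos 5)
Your proof is correct and takes essentially the same route as the paper, which establishes the lemma by appealing to the value-function argument of Theorem~1 in~\cite{papandriopoulos}: identify $I^c$ with the minimal per-cell power for fixed interference (as the paper explicitly does just before the lemma), obtain monotonicity from shrinkage of the feasible set, and scalability from the fact that the thermal noise floor $\sigma^2$ does not scale with $t$.
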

The proof of Lemma~\ref{lem:yates} uses arguments which are very similar to
the proof of Theorem~1 in~\cite{papandriopoulos}. Function ${\mathbf{I}}$
is then a {\it standard interference function}, using the terminology
of~\cite{yates}. Therefore, as stated in~\cite{yates}, such a function
${\bf I}$ admits at most one fixed point. On the other hand, the existence 
of a fixed point is ensured by the feasibility of
Problem~\ref{prob:opt_multi_2d} and by the fact
that~(\ref{eq:fixedtilde}) holds for any global solution. In other words, if
Problem~\ref{prob:opt_multi_2d} is feasible, then function ${\bf I}$ does admit
a fixed point and this fixed point is unique. In the latter case, the results
of~\cite{yates} state furthermore that a simple fixed point algorithm (such as
Algorithm~\ref{algo:ping_pong_2d} given below) applied to function $\mathbf{I}$
converges necessarily to its unique fixed point.
\begin{remark}
Note that in Algorithm~\ref{algo:ping_pong_2d}, the only information needed by
each base station~$c$ ($c=A,B,C$) about the other two cells $\bar{c}$,
$\bar{\bar{c}}$ is the current value of the powers $Q_1^{\bar{c}}$,
$Q_1^{\bar{{\bar c}}}$ transmitted in the interference band $\cal I$. This value
can \emph{i)} either be measured by base station~$c$ at each iteration of
Algorithm~\ref{algo:ping_pong_2d}, or \emph{ii)} it can be communicated to it by
base stations~$\bar{c}$ and $\bar{\bar{c}}$ over a dedicated link. In the first
case, no message passing is required, and in the second case only few
information is exchanged between the base stations.
Algorithm~\ref{algo:ping_pong_2d} can thus be implemented in a distributed
fashion.
\end{remark}
Of course, the feasibility of Problem~\ref{prob:opt_multi_2d} depends on the
choice of the separating curves $\{d_{\textrm{subopt}}^c(.)\}_{c=A,B,C}$.
Section~\ref{sec:asym_analysis_2d} addresses the relevant selection of these
curves such that Algorithm~\ref{algo:ping_pong_2d} converges for a sufficiently
large number of users. 
\subsection{Resource Allocation for Protected Users 
$\boldsymbol{\{{\cal K}^c_P\}_{c=A,B,C}}$}
\label{sec:alloc_for_noninter_2d}

Since users ${\cal K}^c_P$ in each cell~$c$ are constrained to modulate only the
subcarriers of subset $\mathcal{P}_c$, they are not subject to
multicell interference. Resource allocation for such users can thus be done
independently in each cell by solving a simple single cell optimization problem
which is a special case of Problem~\ref{prob:opt_multi_2d}. Focus for example
on cell~$A$.
One can show~\cite{tsp1} that the resource allocation problem for users of this
cell is convex in variables $\{\gamma\kd^A, w\kd^A\}_{k\in {\cal K}^A_P}$, where
$w\kd^A=\gamma\kd^A P\kd^A$. Its solution can be obtained by solving
the associated KKT conditions and is given by:
\begin{align}
& P\kd^A=g\kd^{-1}f^{-1}(g\kd \beta_2^A) \label{eq:power_protected_2d}\\
& \gamma\kd^A=\frac{R_k}{C\left(g\kd \beta_2^A\right)}\:.
\label{eq:gamma_protected_2d}
\end{align}
Parameter $\beta_2^A$ is obtained by writing that constraint
$\sum_k \gamma\kd^A = \frac{1-\alpha}{3}$ holds as the unique solution to:
\begin{equation}\label{eq:beta_2_tilde_2d}
\sum_{k\in {\cal K}^A_P} \frac{R_k}{C(g\kd \beta_2^A)} = \frac{1-\alpha}{3}\:.
\end{equation}
Resource allocation parameters for users of cells~$B$ and~$C$ can be
similarly obtained. The following procedure performs the above resource
allocation for protected users.
\subsection{Summary: Distributed Resource Allocation Algorithm}

The proposed distributed resource allocation scheme is finally summarized
by Algorithm~\ref{algo:suboptimal_2d}.
\subsection{Complexity Analysis}
\label{sec:complexity}

By referring to Algorithm~\ref{algo:protected_2d}, it is straightforward to
verify that resource allocation for protected users can be reduced to the
determination in each cell~$c$ of the value of $\beta_2^c$, which is the unique
solution to the equation 
$\sum_{k\in {\cal K}^c_P} \frac{R_k}{C(g\kd \beta_2^c)} = \frac{1-\alpha}{3}$.
Since function~$x\mapsto 1/C(x)$ is convex, the latter solution can be
numerically obtained by any of the classical zero-finding algorithms of the
convex optimization literature such as the gradient method~\cite{grad}. Denote
by~$N_{\textrm{grad}}$ the number of iterations required till the convergence
of such a method. Each one of these iterations requires a computational
complexity proportional to the number of terms in the lhs of the equation. The
overall computational complexity of finding $\beta_2^c$ is therefore of order
$O(N_{\textrm{grad}} K)$. In the same way, one can show that each iteration of
Algorithm~\ref{algo:ping_pong_2d} can be performed with a complexity of order
$O(N_{\textrm{grad}} K)$. Let $N_{\textrm{iter}}$ designate the number of
iterations of Algorithm~\ref{algo:ping_pong_2d} needed till convergence (within
a certain accuracy). The overall computational complexity of
Algorithm~\ref{algo:ping_pong_2d}, and hence of
Algorithm~\ref{algo:suboptimal_2d} as well, is thus of the order of
$O(N_{\textrm{iter}} N_{\textrm{grad}} K)$. Our simulations showed that
Algorithm~\ref{algo:ping_pong_2d} converges relatively quickly
in most of the cases. Indeed, no more than $N_{\textrm{iter}}=15$ iterations
were needed to reach convergence within a very reasonable accuracy in most of
the practical situations.
\section{Determination of Curves
$\boldsymbol{\{\MakeLowercase{d}_{\textrm{\MakeLowercase{subopt}}}^{
\MakeLowercase{c}}(.)\}}$ and Asymptotic Optimality
of Algorithm~3}
\label{sec:asym_analysis_2d}

The aim of this section is to relevantly select the separating curves
$d_{\textrm{subopt}}^A(.)$, $d_{\textrm{subopt}}^B(.)$ and
$d_{\textrm{subopt}}^C(.)$. For that sake, we consider the case where the
number~$K$ of users tends to infinity in a sense that will be clear later on,
and we prove Theorem~\ref{the:subopt_2d} (see Subsection~\ref{sec:asym_opt_2d})
which states the following.
\framebox{{\parbox{\linewidth}{
\centering
There exist curves $\{d_{\textrm{subopt}}^c(.)\}_{c=A,B,C}$ such that the
transmit power of Algorithm~\ref{algo:suboptimal_2d} converges as $K\to\infty$
to the limit total power of an optimal solution to the joint allocation problem
(Problem~\ref{prob:2d_multi}).
}}}
Otherwise stated, Algorithm~\ref{algo:suboptimal_2d} is asymptotically optimal
if the separating curves are well chosen. In order to prove this result, we
first characterize the form and the total transmit power $Q_T\K$ of an {\bf
optimal} solution to Problem~\ref{prob:2d_multi} in the special case where users
of each cell are aligned on parallel equispaced lines. Indeed, we prove that the
latter solution has the following ``binary'' property: In each cell~$c$, there
exists a curve~$d_{\theta\cK}$ that separates users modulating uniquely
protected or non protected subcarriers. Here, $\theta^{c,(K)}$ is a vector of
parameters that will be specified later on and which depends on the system
setting (including the number $K$ of users). We show that as the number~$K$ of
users tends to infinity, $d_{\theta\cK}$ converges, at least for certain
subsequences~$(K)$, to a curve $d_{\theta^c}$ that can be characterized by
solving a certain system of equations. The same system allows to compute the
limit $Q_T=\lim_{K\to\infty}Q_T\K$. Next, we consider the case of an arbitrary
geographical distribution where users are not necessarily aligned on parallel
lines. Eventhough the aforementioned binary property no longer holds in this
general case, we show that the transmit power of an optimal solution to
Problem~\ref{prob:2d_multi} converges to the same limit $Q_T$ as in the case of
aligned users. This result will suggest to relevantly select the separating
curves $d_{\textrm{subopt}}^c(.)$ of the \emph{suboptimal} allocation algorithm
to be equal to the asymptotic \emph{optimal} curves $d_{\theta^c}(.)$. Thanks to
the latter curve selection, we prove that the proposed allocation algorithm
becomes asymptotically optimal.
 
\subsection{Asymptotic Optimal Allocation}
\label{sec:optimal_asyp_analysis_2d_aligned}

The characterization of the asymptotic behaviour of an optimal solution to the
joint resource allocation problem is performed by the following three steps. 

\subsubsection{Step 1: Single Cell Resource Allocation}

We first consider a particular case where users of each cell are aligned on
equispaced parallel lines. Focus for example on cell~$A$ and define $I^A$
parallel equispaced lines ($I^A < K^A$) which pass through cell~$A$ and which
are perpendicular to the axis~$BC$ as illustrated in
Figure~\ref{fig:finite_separating_curve}. Next, assign each one of these lines
an index $i\in\{1 \ldots I^A\}$. In the sequel, we denote by
$\mathcal{L}_i^A\subset\{1 \ldots K^A\}$ the subset composed of the users of
cell~$A$ located on the line whose index is~$i$. Assume that the resource
allocation parameters of users of cells~$B$ and~$C$ are fixed and recall the
definition of $C_k$ given by~\eqref{eq:2d_ergodic_capacity} as the
ergodic capacity of user~$k$. The optimal resource allocation problem for
cell~$A$ consists in characterizing $\{\gamma\ku^A$, $\gamma\kd^A$, $P\ku^A$,
$P\kd^A\}_{k=1\ldots K^A}$ allowing to satisfy the rate requirements of all
users $k\in\{1, \ldots, K^A\}$. The determination of these parameters should be
done such that the power 
$Q^A=\sum_{k=1}^{K^A}\gamma\ku^A P\ku^A+\gamma\kd^A P\kd^A$ to be spent is
minimum:
\begin{problem}
\label{prob:single_2d}
Minimize $Q^A=\sum_{k=1}^{K^A}\gamma\ku^A P\ku^A+\gamma\kd^A P\kd^A$ with
respect to 
$\{\gamma\ku^A$, $\gamma\kd^A$, $P\ku^A$, $P\kd^A\}_{k = 1\ldots K^A}$ 
under the following constraints:
\begin{align*}
\mathbf{C1:}\: &\forall k,R_k\leq C_k & 
&\mathbf{C4:}\: \gamma\ku^A\geq0,\gamma\kd^A\geq0\\
\mathbf{C2:}\: &\sum_{k=1}^{K^A}\gamma\ku^A = \alpha & 
&\mathbf{C5:}\: P\ku^A\geq0,P\kd^A\geq0.\\
\mathbf{C3:}\: &\sum_{k=1}^{K^A}\gamma\kd^A = \frac{1-\alpha}{3} & 
& {\bf C6:} \: \sum_{k=1}^{K^A} \gamma\ku^A P\ku^A \leq {\cal Q}\:.
\end{align*}
\end{problem}
Here, constraint~$\bf C6$ is a ``low nuisance constraint'' which is introduced
to limit the interference \emph{produced} by Base Station~$A$. In other words,
the power $Q_1^A=\sum_k \gamma\ku^A P\ku^A$ which is transmitted by base
station~$A$ on the subcarriers of subset~$\cal I$ should not exceed a certain
\emph{nuisance level}~$\cal Q$. The introduction of~$\bf C6$ is a technical
tool revealed to be useful in solving the multicell allocation problem later
on. On one hand, note that Problem~\ref{prob:single_2d} is feasible for any
$\alpha>0$ and ${\cal Q}\geq 0$ since it has at least the following trivial
solution. The solution consists in assigning zero power $P\ku^A=0$ on the
subcarriers of subset~$\cal I$ (so that constraint $\bf C6$ will be satisfied),
and in performing resource allocation only using the subcarriers of
subset~$\mathcal{P}_A$. On the other hand, Problem~\ref{prob:single_2d} can be
made convex after a slight change of variables, as a matter of fact. Therefore,
any global solution to this problem is characterized by the KKT conditions. The
simplification of these conditions is not presented in this paper due to lack of
space. However, it can be done in a very similar way as in the case of 1-D
cellular networks addressed in our previous work~\cite{tsp1} leading to the
following result. Resource allocation parameters of any of the subsets
$\mathcal{L}_i^A$ of users located on lines $i=1\ldots I^A$ have a ``binary''
separation property as the users of a 1-D cell. This property is summarized
below. Define the following decreasing function for each $x\in\mathbb{R}_+$: 
\begin{equation}\label{eq:F_fun}
F(x) = \EE{\frac{Z}{1+f^{-1}(x)Z}}\:,
\end{equation}
and let $\beta_1$, $\beta_2$ and $\xi$ designate the Lagrange multipliers
associated with constraints $\bf C2$, $\bf C3$ and $\bf C6$ respectively.
There exists a ``pivot-position'' on each line~$i$ such that users
$k\in\mathcal{L}_i^A$ who are farther than this position are
uniquely assigned subcarriers from the protected subset $\mathcal{P}_A$ (by
setting $\gamma\ku^A=0$). Moreover, such ``protected users'' satisfy:
\begin{equation}\label{eq:single_above}
\frac{g\ku\left(Q_1^B,Q_1^C\right)}{1+\xi}
F\left(\frac{g\ku\left(Q_1^B,Q_1^C\right)}{1+\xi}\beta_1\right)< g\kd F(g\kd
\beta_2)\:.
\end{equation}
On the other hand, users $k\in\mathcal{L}_i^A$ who are closer to the base
station than the pivot-position are uniquely assigned interference subcarriers
from subset $\mathcal{I}$ (by setting $\gamma\kd^A=0$). Such ``non
protected users'' satisfy:
\begin{equation}\label{eq:single_under}
\frac{g\ku\left(Q_1^B,Q_1^C\right)}{1+\xi}
F\left(\frac{g\ku\left(Q_1^B,Q_1^C\right)}{1+\xi}\beta_1\right)> g\kd F(g\kd
\beta_2)\:.
\end{equation} 
The proof of the above separation property uses Conjecture~1 in~\cite{tsp1}
which can be easily validated numerically. Inequalities~\eqref{eq:single_above}
and~\eqref{eq:single_under} suggest the definition of a curve that
geographically separates protected from non protected users of cell~$A$. This
can be done as follows. We write the variance $\rho_k$ of the channel gain of
user~$k$ as $\rho_k=\rho(x_k,y_k)$ where $\rho(x,y)$ models the path loss.
Function $\rho(x,y)$ is assumed to have the form $\rho(x,y)=\eta
(\sqrt{x^2+y^2})^{-s}$ where $\sqrt{x^2+y^2}$ is the distance separating $(x,y)$
from the base station, $\eta$ is a certain gain and $s$ is the path-loss
coefficient. We also denote by $g_2(x,y) = \frac{\rho(x,y)}{\sigma^2}$ the GNR
on the protected subcarriers associated with a user at position $(x,y)$. Note
that for any user~$k$, $g_2(x_k,y_k) = g\kd$. In the same way, $g_1(x,y,{\cal
Q}',{\cal Q}'')$ denotes the GINR at position $(x,y)$ if the interfering base
stations are transmitting with power ${\cal Q}'$ and ${\cal Q}''$ on the
interference subcarriers $\cal I$. Using the above notation, we have
$g_1\left(x_k,y_k,Q_1^B,Q_1^C\right) = g\ku\left(Q_1^B,Q_1^C\right)$ for each
user~$k$ in cell~$A$. Note that for any $(x,y)$, $g_2(x,y)=g_1(x,y,0,0)$.
Finally, for each 
$\theta=(\beta_1,\beta_2,\mathcal{Q}',\mathcal{Q}'',\xi)\in\mathbb{R}_+^5$, 
we define
\begin{equation}\label{eq:H_fun}
W_{\theta}(x,y)=
\frac{g_1(x,y,\mathcal{Q}',\mathcal{Q}'')}{1+\xi} 
F\left(\frac{g_1(x,y,\mathcal{Q}',\mathcal{Q}'')}{1+\xi}
\beta_1\right)-g_2(x,y) F(g_2(x,y) \beta_2)\:.
\end{equation}
Due to~\eqref{eq:single_above}, we have $W_{\theta}(x_k,y_k)<0$ for each
protected user~$k$ \emph{i.e.,} for users farther from the base
station than the pivot-position. Inversely, $W_{\theta}(x_k,y_k)>0$ for each non
protected user~$k$ \emph{i.e.,} for users closer to the base station than the
pivot-position. Therefore, function $d_{\theta}(x)$ given below defines the
curve that we are seeking and which geographically separates protected from non
protected users of cell~$A$:
\begin{equation}\label{eq:separating_curve_fini}
\begin{multlined}
d_{\theta}(x)=\left\{
\begin{array}{ll}
\frac{|x|}{\sqrt{3}} &\textrm{if }
W_{\theta}\left(x,\frac{|x|}{\sqrt{3}}\right)<0\\
\frac{2D-|x|}{\sqrt{3}} &\textrm{if }
\min\left\{W_{\theta}\left(x,\frac{|x|}{\sqrt{3}}\right),
W_{\theta}\left(x,\frac{2D-|x|}{\sqrt{3}}\right)\right\}>0\\
\textrm{the unique zero of } y\mapsto W_{\theta}(x,y)
&\textrm{otherwise}\:.
\end{array}
\right.
\end{multlined}
\end{equation}
Note in particular that the first two conditions
of~\eqref{eq:separating_curve_fini} hold in the case where the pivot-position at
line $x$ is located at the upper sector border $y=|x|/\sqrt{3}$ or the lower
sector border $y=\left(2D-|x|\right)/\sqrt{3}$. When these two conditions are
\emph{not} satisfied, the existence of the zero of the continuous function 
$y\mapsto W_{\theta}(x,y)$ is straightforward due to the intermediate value
theorem. The uniqueness of this zero can be proved by arguments already
developed in the proof of Lemma~1 in~\cite{tsp1}. Finally, we obtain the
following lemma. 
\begin{lemma}
\label{lem:single_2d} 
Assume that the users of cell~$A$ are aligned on $I^A$ parallel equispaced
lines (as in Figure~\ref{fig:finite_separating_curve}) and that the power
transmitted by base stations~$B$ and~$C$ on the non protected subcarriers $\cal
I$ is set to $Q_1^B$ and~$Q_1^C$ respectively. The global solution 
$\{\gamma\ku^A,\gamma\kd^A,P\ku^A,P\kd^A\}_{k=1\ldots K^A}$ to
Problem~\ref{prob:single_2d} is unique and is as follows. There exist three
unique nonnegative numbers $\beta_1$, $\beta_2$, $\xi$ such that:
\begin{enumerate}
\item For each $k\in{{\cal L}_i^A}$ such that $y_k<d_{\theta}(x_k)$,
\begin{equation}
\begin{array}[h]{l|l}
\dsp
P\ku^A=\left[g\ku\left(Q_1^B,Q_1^C\right)\right]^{-1}f^{-1}
\left(\frac{g\ku\left(Q_1^B,Q_1^C\right)}{1+\xi} \beta_1\right) &
P\kd^A=0 \\
\dsp
\gamma\ku^A=\frac{R_k}{C\left(\frac{g\ku\left(Q_1^B,Q_1^C\right)}{1+\xi}
\beta_1\right)} &
\gamma\kd^A=0
\end{array}\label{eq:allocinf_2d}
\end{equation}
\item For each $k\in{{\cal L}_i^A}$ such that $y_k>d_{\theta}(x_k)$,
\begin{equation}
\label{eq:allocsup_2d}
\begin{array}[h]{l|l}
P\ku^A=0 & \dsp P\kd^A=g\kd^{-1}f^{-1}(g\kd \beta_2) \\
\gamma\ku^A=0 & \dsp 
\gamma\kd^A=\frac{R_k}{C\left(g\kd \beta_2\right)}
\end{array}
\end{equation}
\end{enumerate}
where $\beta_1$, $\beta_2$ and $\xi$ are the Lagrange multipliers associated
with constraints $\bf C2$, $\bf C3$ and $\bf C6$ respectively, and where
$\theta=\left(\beta_1,\beta_2,Q_1^B,Q_1^C,\xi\right)$. Here, $d_{\theta}(.)$ is
the function defined by~\eqref{eq:separating_curve_fini}.
\end{lemma}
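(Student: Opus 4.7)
The plan is to mimic the proof strategy used for the 1-D case in \cite{tsp1}, adapted to accommodate the extra constraint $\bf C6$ and the 2-D geometry. First I would convexify Problem~\ref{prob:single_2d} via the standard change of variables $w\ku^A = \gamma\ku^A P\ku^A$ and $w\kd^A = \gamma\kd^A P\kd^A$. Under this reparametrization, each ergodic capacity term $\gamma \mathbb{E}[\log(1+g\tfrac{w}{\gamma}Z)]$ is the perspective of a concave function, hence jointly concave in $(\gamma,w)$, and the nuisance constraint $\bf C6$ becomes linear $\sum_k w\ku^A\leq \mathcal{Q}$. The feasibility mentioned in the text (set $P\ku^A=0$) ensures a Slater point, so strong duality holds and KKT is necessary and sufficient for any global optimum.

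Next I would write the Lagrangian with multipliers $\lambda_k\geq 0$ on the rate constraints $\bf C1$, $\beta_1$ on $\bf C2$, $\beta_2$ on $\bf C3$ and $\xi\geq 0$ on $\bf C6$, and exploit the stationarity conditions separately in $w\ku^A$, $w\kd^A$, $\gamma\ku^A$ and $\gamma\kd^A$. Stationarity in $w\ku^A$ gives $\lambda_k \mathbb{E}\!\left[\tfrac{g\ku Z}{1+g\ku P\ku^A Z}\right]=1+\xi$, which combined with stationarity in $\gamma\ku^A$ (complementary slackness case) yields $P\ku^A=[g\ku]^{-1}f^{-1}(\tfrac{g\ku}{1+\xi}\beta_1)$ and $\gamma\ku^A=R_k/C(\tfrac{g\ku}{1+\xi}\beta_1)$, matching \eqref{eq:allocinf_2d}; the analogous manipulation on the protected side (without the $\xi$ correction since $\bf C6$ does not involve $w\kd^A$) yields \eqref{eq:allocsup_2d}. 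The rate constraint $\bf C1$ is tight at the optimum since otherwise one could reduce power, so $\lambda_k>0$ and the multipliers $\beta_1,\beta_2$ can be identified through $\bf C2$ and $\bf C3$.

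The heart of the proof, and the step I expect to be the main obstacle, is establishing the \emph{binary} separation, i.e.\ that for a generic user no mixed allocation with $\gamma\ku^A>0$ \emph{and} $\gamma\kd^A>0$ can occur. Following the argument sketched in \cite{tsp1}, I would derive from the stationarity conditions in $\gamma\ku^A$ and $\gamma\kd^A$ two inequalities whose difference reads exactly
\begin{equation*}
\frac{g\ku}{1+\xi} F\!\left(\frac{g\ku}{1+\xi}\beta_1\right) - g\kd F(g\kd \beta_2) \gtreqless 0,
\end{equation*}
that is $W_\theta(x_k,y_k)\gtreqless 0$. If $W_\theta(x_k,y_k)>0$ then $\gamma\kd^A=0$ (interference-only), if $W_\theta(x_k,y_k)<0$ then $\gamma\ku^A=0$ (protected-only); only the zero set of $W_\theta$ is a potential source of mixed allocations, and this set has measure zero in each line provided the zero of $y\mapsto W_\theta(x,y)$ is unique. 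Uniqueness of that zero is exactly the monotonicity fact invoked in Lemma~1 of \cite{tsp1}: since $g_2(x,y)$ and $g_1(x,y,Q_1^B,Q_1^C)$ are both strictly decreasing in $y$ (for $y$ on the relevant side of the sector), and since $u\mapsto u\,F(u\beta)$ inherits a strict monotonicity from $F$, the function $y\mapsto W_\theta(x,y)$ is strictly monotonic, so its zero (if it lies in the cell) is unique. This proves that the geographic separation is governed by the curve $d_\theta$ defined in \eqref{eq:separating_curve_fini}, with the three boundary cases of \eqref{eq:separating_curve_fini} corresponding to the pivot lying at, above, or inside the sector borders.

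Finally, uniqueness of $(\beta_1,\beta_2,\xi)$ follows by substituting \eqref{eq:allocinf_2d}-\eqref{eq:allocsup_2d} back into $\bf C2$ and $\bf C3$ (and into $\bf C6$ with complementary slackness on $\xi$) and invoking the strict monotonicity of $x\mapsto 1/C(x)$ and $x\mapsto f^{-1}(x)/C(x)$, which makes the resulting system triangular enough to admit a single solution; uniqueness of the primal variables then follows from the strict concavity of the Lagrangian (after the convexifying change of variables) in $(\gamma,w)$ restricted to the active set determined by $\theta$.
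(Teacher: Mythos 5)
Your overall route---convexification via $w=\gamma P$, KKT with a multiplier $\xi$ on $\mathbf{C6}$, identification of $P\ku^A$ and $\gamma\ku^A$ through $f$ and $C$, and the binary separation read off from the sign of $W_\theta$---is the same as the paper's, which itself only sketches the argument and defers the details to the 1-D analysis of~\cite{tsp1}. The KKT manipulations you describe are correct and do yield \eqref{eq:allocinf_2d}--\eqref{eq:allocsup_2d} together with the characterization of mixed users by $W_\theta(x_k,y_k)=0$.

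The genuine gap is in the step you yourself flag as the heart of the proof: the single-crossing property of $y\mapsto W_\theta(x,y)$. You assert that $u\mapsto u\,F(u\beta)$ ``inherits a strict monotonicity from $F$.'' It does not: $F$ defined in \eqref{eq:F_fun} is \emph{decreasing} (since $f^{-1}$ is increasing), so $u\,F(u\beta)$ is the product of an increasing and a decreasing factor and its monotonicity is not a formal consequence of anything stated so far. Worse, $W_\theta$ is the \emph{difference} of two such terms, $\frac{g_1}{1+\xi}F\bigl(\frac{g_1}{1+\xi}\beta_1\bigr)-g_2F(g_2\beta_2)$, whose arguments $g_1(x,y,Q_1^B,Q_1^C)$ and $g_2(x,y)$ both decrease with distance but at different rates (through $\sigma_k^2$ versus $\sigma^2$); even if each term were monotone in $y$, their difference need not be. This is precisely the point at which the paper invokes Conjecture~1 of~\cite{tsp1} --- a statement that is only \emph{validated numerically}, not proven --- for the separation property \eqref{eq:single_above}--\eqref{eq:single_under}, and the arguments of Lemma~1 of~\cite{tsp1} for the uniqueness of the zero of $y\mapsto W_\theta(x,y)$. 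Your proposal presents this as a routine monotonicity inheritance, which both misstates the direction of monotonicity of $F$ and hides the fact that the lemma's separation property rests on an unproven (though numerically supported) conjecture. To repair the write-up you should either reproduce the ratio argument of~\cite{tsp1} (which exploits the monotonicity in position of $g_1/g_2=\sigma^2/\sigma_k^2$) and state explicitly where Conjecture~1 is used, or supply an actual proof of the required monotonicity of $u\mapsto u F(u\beta)$ and of the single crossing of the difference.
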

The uniqueness of the above global solution can be proved using arguments
similar to those of the proof of Proposition~1 in~\cite{tsp1}. Note that due to
the above lemma, there is \emph{at most} one user in each subset
$\mathcal{L}_i^A$ who is likely to modulate both protected and non protected
subcarriers. If such a ``pivot-user'' exists, then it is necessarily located
\emph{on} the curve $d_{\theta}(.)$. Therefore, there are at most $I^A$
pivot-users in cell~$A$.

\subsubsection{Step 2: From Single Cell to Multicell Resource Allocation}

We now consider the problem of joint resource allocation
(Problem~\ref{prob:2d_multi}) while still assuming that users of each cell are
aligned on equispaced parallel lines. Recall the definition of
$\mathcal{L}_i^c$ as the subset of users of cell~$c$ located on line~$i$ ($i=1
\ldots I^c$). The following lemma implies that any optimal solution to
Problem~\ref{prob:2d_multi} has in each cell the same form as the solution to
the single cell problem given by Lemma~\ref{lem:single_2d}.
\begin{lemma}
\label{lem:multi_2d}
Assume that the positions of users of each cell~$c\in\{A,B,C\}$ are aligned on
$I^c$ parallel equispaced lines. Any global solution
$\{\gamma\ku^c,P\ku^c,\gamma\kd^c,P\kd^c\}_{\stackrel{c=A,B,C}{k=1\ldots K^c}}$
to Problem~\ref{prob:2d_multi} satisfies the following. Let
$Q_1^c=\sum_{k=1}^{K^c}\gamma\ku^c P\ku^c$ designate the power transmitted
by base station~$c$ on the reused subcarriers~$\cal I$. There exist nine
positive numbers $\{\beta_1^c$, $\beta_2^c$, $\xi^c\}_{c=A,B,C}$ such
that~\eqref{eq:allocinf_2d}, \eqref{eq:allocsup_2d} hold in each cell.
\end{lemma}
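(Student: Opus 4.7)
The plan is to reduce the multicell problem to three single-cell problems (one per cell) by fixing the interference levels produced by the other two base stations, and then to invoke Lemma~\ref{lem:single_2d} in each cell.

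Let $\{\gamma\ku^c, P\ku^c, \gamma\kd^c, P\kd^c\}$ be any global solution to Problem~\ref{prob:2d_multi} and set $Q_1^c = \sum_{k=1}^{K^c}\gamma\ku^c P\ku^c$ for $c=A,B,C$. Focusing on cell~$A$, I would consider Problem~\ref{prob:single_2d} with the interfering powers fixed to $Q_1^B, Q_1^C$ and with the nuisance level set to $\mathcal{Q}=Q_1^A$. The cell-$A$ part of the multicell optimum is immediately feasible for this single-cell problem: constraints $\mathbf{C1}$--$\mathbf{C5}$ of Problem~\ref{prob:single_2d} are inherited from $\mathbf{C1}$--$\mathbf{C4}$ of Problem~\ref{prob:2d_multi}, while $\mathbf{C6}$ holds with equality by the very definition of~$Q_1^A$. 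The key step is to show that this cell-$A$ allocation also \emph{minimizes} the single-cell objective $Q^A=\sum_k \gamma\ku^A P\ku^A + \gamma\kd^A P\kd^A$.

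I would argue this minimality by contradiction. Suppose there existed another cell-$A$ allocation feasible for Problem~\ref{prob:single_2d} with $\mathcal{Q}=Q_1^A$ and with strictly smaller $Q^A$. Combine this new cell-$A$ allocation with the cells $B$ and $C$ allocations of the original multicell optimum. Constraints $\mathbf{C2}$--$\mathbf{C4}$ of Problem~\ref{prob:2d_multi} are clearly preserved in every cell. For cells~$B$ and~$C$, constraint $\mathbf{C1}$ also remains satisfied, because the new cell-$A$ allocation has $\sum_k \gamma\ku^A P\ku^A \leq Q_1^A$ by $\mathbf{C6}$; hence, in view of~\eqref{eq:2d_multicell_interference}, the multicell interference perceived by users in $\bar A=B$ and $\bar{\bar A}=C$ cannot increase, so their GINRs and therefore their ergodic capacities $C_k$ defined in~\eqref{eq:2d_ergodic_capacity} cannot decrease. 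The composite allocation is thus feasible for Problem~\ref{prob:2d_multi} with a strictly smaller total power, contradicting the assumed multicell optimality. Consequently, the cell-$A$ part of the multicell optimum is a global solution to Problem~\ref{prob:single_2d} with $\mathcal{Q}=Q_1^A$, and Lemma~\ref{lem:single_2d} yields nonnegative multipliers $\beta_1^A, \beta_2^A, \xi^A$ such that~\eqref{eq:allocinf_2d}--\eqref{eq:allocsup_2d} hold for every $k\in\{1,\ldots,K^A\}$. Repeating the argument for $c=B$ and $c=C$ furnishes the remaining six multipliers, finishing the proof.

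The only real difficulty is the monotonicity step, namely that reducing $Q_1^A$ cannot hurt the other cells; this follows directly from the fact that $g\ku(Q_1^{\bar c}, Q_1^{\bar{\bar c}}) = \rho_k/\sigma_k^2$ is a nonincreasing function of each interfering average power via~\eqref{eq:2d_multicell_interference}, together with the monotonicity of $C_k$ in its GINR argument in~\eqref{eq:2d_ergodic_capacity}. Beyond this, the argument is a straightforward cell-by-cell decoupling via interference fixing.
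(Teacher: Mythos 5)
Your proof is correct and follows essentially the same route as the paper's: decouple the multicell optimum into per-cell instances of Problem~\ref{prob:single_2d} by fixing the interfering powers at their optimal values and setting the nuisance level $\mathcal{Q}=Q_1^c$, use constraint $\mathbf{C6}$ to guarantee that improving one cell cannot violate the rate constraints of the others, and then invoke Lemma~\ref{lem:single_2d}. The paper phrases the optimality-transfer step constructively (building a candidate allocation and concluding equality via the uniqueness of the single-cell solution) whereas you argue by contradiction, but the substance is identical.
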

The proof of Lemma~\ref{lem:multi_2d} is provided in
Appendix~\ref{app:lem_multi}. For each cell~$c\in\{A,B,C\}$, denote by $\bar{c}$
and $\bar{\bar{c}}$ the other two cells and recall the definition of function
$d_{\theta}(x)$ given by~\eqref{eq:separating_curve_fini} for any $x\in[-D,D]$
and $\theta\in\mathbb{R}_+^5$. Lemma~\ref{lem:multi_2d} states that when an
optimal solution to Problem~\ref{prob:2d_multi} is applied, then there exists in
each cell~$c=A,B,C$ a curve $d_{\theta^c}(.)$, where  
$\theta^c=(\beta_1^c,\beta_2^c,Q_1^{\bar{c}},Q_1^{\bar{\bar{c}}},\xi^c)$, that
separates protected users from non protected users.
\subsubsection{Step 3: Asymptotic Performance of the Optimal Resource
Allocation}

Denote by $\theta\cK= \big(\beta_1\cK$, $\beta_2\cK$, $Q_1^{\bar{c},(K)}$,
$Q_1^{\bar{\bar{c}},(K)}$, $\xi\cK\Big)$ for $c=A,B,C$ any set of parameters
chosen such that Lemma~\ref{lem:multi_2d} holds. Superscript $(K)$ is used in
order to stress the dependency of the above parameters on the number of users
$K$. We now characterize the behaviour of $\theta\cK$ as the number $K$ of users
tends to infinity. Once the behaviour of $\theta\cK$ determined, the asymptotic
behaviour of both the separating curves $d_{\theta\cK}(.)$ and the total
transmit power of the optimal solution to Problem~\ref{prob:2d_multi} can be
fully characterized. Assume that the total number~$K$ of users tends to infinity
in such a way that $K^c/K\to 1/3$ \emph{i.e.,} the number of users in each cell
is asymptotically equivalent. Denote by~$B$ the total bandwidth of the system.
Define $r_k$ as the target rate of user~$k$ in nats/s \emph{i.e.,} $r_k=B R_k$
where $R_k$ is the data rate requirement of user~$k$ in nats/s/Hz. Since the sum
$\sum_k r_k$ of rate requirements tends to infinity, we let the bandwidth $B$
grow to infinity and we assume that $K/B \to t$ where $t$ is a positive real
number. We use in the sequel the notation $I\cK$ to designate the number of
parallel equispaced lines in cell~$c$. We also assume that $I\cK$ is such that
\begin{eqnarray*} 
I\cK\xrightarrow[K\to\infty]{}\infty\:, &
\frac{I\cK}{K}\xrightarrow[K\to\infty]{} 0\:.
\end{eqnarray*}
In order to simplify the proof of the results, we assume without restriction
that the rate requirement $r_k$ for each user~$k$ is upper-bounded by a certain
constant $r_{\max}$ where $r_{\max}$ can be chosen as large as needed. We also
assume that for each user~$k$, $y_k\geq \epsilon$ where $\epsilon>0$ can be
chosen as small as needed.

As a matter of fact, sequences $\beta_1\cK$, $\beta_2\cK$, $Q_1\cK$, $Q_2\cK$,
$\xi\cK$ are upper-bounded (refer to Appendix~E in~\cite{phd} for the proof).
One can thus extract convergent subsequences from the above sequences. With a
slight abuse of notation, $\theta\cK=\Big(\beta_1\cK$, $\beta_2\cK$,
$Q_1^{\bar{c},(K)}$, $Q_1^{\bar{\bar{c}},(K)}$, $\xi\cK\Big)$ will designate
from now on these convergent subsequences and their respective limits will be
denoted by $\theta^c=(\beta_1^c$, $\beta_2^c$, $Q_1^{\bar{c}}$,
$Q_1^{\bar{\bar{c}}}$, $\xi^c)$. We now provide a system of equation satisfied
by the accumulation points $\theta^c=(\beta_1^c$, $\beta_2^c$ $Q_1^{\bar c}$,
$Q_1^{\bar{\bar{c}}}$, $\xi^c)$. Due to Lemma~\ref{lem:multi_2d}, the power
$Q_1\cK=\sum_{k=1}^{K^c}\gamma\ku^c P\ku^c$ transmitted by base station~$c$
on the non protected subcarriers~$\cal I$ can be written as
\begin{equation}\label{eq:finite_Q1_2d}
Q_1\cK= \sum_{\stackrel{k\in\{1\ldots K^c\}}{y_k<
d_{\theta\cK}(x_k)}} R_k {\cal F}
\left(x_k,y_k,\beta_1\cK,Q_1^{\bar{c},(K)},Q_1^{\bar{\bar{c}},(K)},
\xi\cK\right)+
\sum_{\stackrel{k\in\{1\ldots K^c\}}{y_k=d_{\theta\cK}(x_k)}}
\gamma_{k,1}^c P_{k,1}^c\:,
\end{equation}
where function $\cal F$ is defined as
\begin{equation}\label{eq:calF_2d}
{\cal F}(x,y,\beta,{\cal Q}',{\cal Q}'',\xi)=\frac{f^{-1}\left(
\frac{g_1(x,y,{\cal Q}',{\cal Q}'')}{1+\xi}\beta\right)}
{g_1(x,y,{\cal Q}'',{\cal Q}'')
C\left(\frac{g_1(x,y,{\cal Q}',{\cal Q}'')}{1+\xi}\beta\right)}
\end{equation}
for each $\left(x,y,\beta,{\cal Q}',{\cal Q}'',\xi\right)$ $\in
[-D,-D]\times [\epsilon,D]\times \mathbb{R}_+^4$. While the first term
in~\eqref{eq:finite_Q1_2d} represents the power allocated to all the users of
cell~$c$ that are uniquely assigned non protected subcarrier from subset
$\mathcal{I}$, the second term in the same equation represents the power
transmitted to the (at most) $I\cK$ pivot-users in the same subset. Since we
assume that $I\cK / K \to 0$, one can show~\cite{phd} that the latter term
is negligible with respect to the first term and that it tends to zero as $K \to
\infty$. Thus, it will be denoted in the sequel by $o_K(1)$,
where $o_K(1)$ stands for any term that converges to zero as $K$ tends to
infinity. Define for each cell~$c=A,B,C$ the following measure $\nu\cK$ on the
Borel sets of $\mathbb{R}_+\times \mathbb{R}\times \mathbb{R}_+$ as
$$
\nu\cK(I,J,L)
=\frac{1}{K^c}\sum_{k=1}^{K^c}\delta_{(r_k,x_k,y_k)}(I,J,L)\:,
$$ 
where $I,J,L$ are intervals of $\mathbb{R}_+$, $\mathbb{R}$ and $\mathbb{R}_+$
respectively and where $\delta_{(r_k,x_k,y_k)}$ is the Dirac measure at point
$(r_k,x_k,y_k)$ \emph{i.e.,} $\delta_{(r_k,x_k,y_k)}(I,J,L)=1$ if $r_k\in I$,
$x_k \in J$, $y_k \in L$ and $\delta_{(r_k,x_k,y_k)}(I,J,L)=0$ otherwise. Note
that $\nu\cK(I,J,L)$ can be interpreted as the number of users of cell~$c$ whose
rate requirement in nats/s is inside $I$, whose x-coordinate is inside~$J$ and
whose y-coordinate is inside~$L$, normalized by $K^c$. In other words, measure
$\nu\cK$ characterizes both the geographical distribution of users in cell~$c$
and their attribution to the different rate requirements. Replacing $R_k$
(in nats/s/Hz) by $\frac{r_k\mbox{ (nats/s) }}{B}$ in~\eqref{eq:finite_Q1_2d},
we obtain 
\begin{equation}
\begin{multlined}
Q_1\cK=
\frac{K^c}{B}\int_{0}^{r_{\max}}\int_{-D}^{D}
\int_{\max\{|x|/\sqrt{3},\epsilon\}}^{d_{\theta\cK}(x)} r {\cal F}
\left(x,y,\beta_1\cK,Q_1^{\bar{c},(K)},Q_1^{\bar{\bar{c}},(K)},\xi\cK\right)
d\nu\cK(r,x,y)+ o_K(1)\:,
\end{multlined}
\label{eq:fonctionMesure_2d}
\end{equation} 
In the sequel, we assume that the following holds.
\begin{assumption}
\label{ass:measure_convergence_2d}
As $K$ tends to infinity, measure~$\nu\cK$ converges weakly to a measure
$\nu^c$. Moreover, $\nu^c$ is the measure product of a limit rate distribution
$\zeta^c$ times a limit location distribution~$\lambda^c$. Finally, $\lambda^c$
is absolutely continuous with respect to the Lebesgue measure on $\mathbb{R}^2$.
\end{assumption}
Note that given the definition of $\zeta^c$, the value $\bar{R}^c$ defined as 
\begin{equation}\label{eq:bar_r}
\bar{R}^c=\frac{t}{3}\int_{0}^{r_{\max}}r\: d\zeta^c(r)
\end{equation}
represents the total average rate requirement per channel use in cell~$c$.
Here, recall that $t$ is the limit of $K/B$ as $K\to\infty$. It is intuitive
that~$Q_1\cK$ as given by~\eqref{eq:fonctionMesure_2d} converges in this case to
a constant $Q_1^c$ defined by
\begin{equation}\label{eq:Q1d_2}
Q_1^c = {\bar R}^c \int_{x=-D}^{D}
\int_{y=\max\{|x|/\sqrt{3},\epsilon\}}^{d_{\theta^c}(x)}
{\cal F}\left(x,y,\beta_1^c,Q_1^{\bar{c}},Q_1^{\bar{\bar{c}}},\xi^c\right)\:
d\lambda^c(x,y)\:.
\end{equation}
Using the same approach as above and recalling that $g_2(x,y)=g_1(x,y,0,0)$, one
can show that the power~$Q_2\cK$ transmitted by base station~$c$ on the
protected subcarriers~$\mathcal{P}_c$ converges as $K\to\infty$ to
\begin{equation}\label{eq:Q2d_2d}
Q_2^c = {\bar R}^c \int_{x=-D}^{D}
\int_{y=d_{\theta^c}(x)}^{\frac{2D-|x|}{\sqrt{3}}} 
{\cal F}\left(x,y,\beta_2^c,0,0,0\right)\:d\lambda^c(x,y)\:.
\end{equation}
Now recall the expression of $\gamma\ku^c$ given by Lemma~\ref{lem:multi_2d}
for all users~$k$ satisfying $y_k<d_{\theta\cK}(x_k)$. 
Plugging the latter expression into constraint $\textrm{\bf C2:}
\sum_{k=1}^{K^c}\gamma\ku^c=\alpha$ of Problem~\ref{prob:2d_multi}, we obtain
\begin{equation}\label{eq:finite_gamma1_2d}
\frac{1}{B}\sum_{\stackrel{k\in\{1\ldots K^c\}}
{y_k<d_{\theta\cK}(x_k)}} r_k 
{\cal G}
\left(x_k,y_k,\beta_1\cK,Q_1^{\bar{c},(K)},Q_1^{\bar{\bar{c}},(K)},
\xi\cK\right)
+\sum_{\stackrel{k\in\{1\ldots K^c\}}
{y_k=d_{\theta\cK}(x_k)}}\gamma_{k,1}^c=\alpha\:,
\end{equation}
where we defined 
\begin{equation}\label{eq:cal_G}
{\cal G}(x,y,\beta,{\cal Q}',{\cal Q}'',\xi)=
\frac{1}{C\left(\frac{g_1(x,y,{\cal Q}',{\cal Q}'')}{1+\xi}\beta\right)}
\end{equation}
for each positive $x$, $y$, $\beta$, ${\cal Q}'$, ${\cal Q}''$ and $\xi$. 
It is thus quite intuitive that equation~\eqref{eq:finite_gamma1_2d} leads as
$K\to\infty$ to
\begin{equation}\label{eq:gamma1d_2}
{\bar R}^c \int_{-D}^{D}
\int_{y=\max\{|x|/\sqrt{3},\epsilon\}}^{d_{\theta^c}(x)}
{\cal G} (x,y,\beta_1^c,Q_1^{\bar{c}},Q_1^{\bar{\bar{c}}},\xi^c)\:
d\lambda^c(x,y)=\alpha\:.
\end{equation}
Similarly, we can show that constraint $\textrm{\bf C3:}
\sum_{k=1}^{K^c}\gamma\kd^c=\frac{1-\alpha}{3}$ of Problem~\ref{prob:2d_multi}
leads as $K\to\infty$ to
\begin{equation}\label{eq:gamma2d_2}
{\bar R}^c\int_{-D}^{D}
\int_{d_{\theta^c}(x)}^{\frac{2D-|x|}{\sqrt{3}}}
{\cal G}(x,y,\beta_2^c,0,0,0)\:d\lambda^c(x,y)=\frac{1-\alpha}{3} \:.
\end{equation}
\begin{remark}
Equations~\SinfTwo~characterize the asymptotic behaviour of $\beta_1\cK$,
$\beta_2\cK$, $Q_1\cK$, $\xi\cK$ in the case where users of each cell are
aligned on parallel equispaced lines. The generalization to the case of an
arbitrary setting of users is not straightforward, since
Lemma~\ref{lem:multi_2d} does not necessarily hold in this general case.
Nonetheless, the lemma given below states that sequences $\beta_1\cK$,
$\beta_2\cK$, $Q_1\cK$, $\xi\cK$ have the same asymptotic behaviour as given
by~\SinfTwo~even if users are not aligned on parallel lines. The proof of this
lemma relies on the following approach. We define in each cell a set of parallel
equispaced lines similar to the lines in
Figure~\ref{fig:finite_separating_curve}. We next consider the projection of
users positions on these lines using two distinct projection rules. This way,
we are able to exploit equations~\SinfTwo~to solve the two resulting
optimization problems. If the number of the latter lines is well chosen, 
then the perturbation of the location of each user will also be small. The
optimization problem can therefore be interpreted as a perturbed version of
the initial problem. The next step is to demonstrate that this perturbation of
the initial setting of users does not alter the accumulation points of sequences
$\beta_1\cK$, $\beta_2\cK$, $Q_1\cK$, $\xi\cK$. This can be done by properly
selecting the way the number of lines scales with~$K$.
\end{remark}
\begin{lemma}
\label{lem:asymptotic_2d}
Assume that $K=K^A+K^B+K^C\to \infty$ in such a way that $K/B\to t > 0$ and
$K^c/K\to 1/3$ for $c=A, B, C$. The total power
$Q_T\K=\sum_{c=A,B,C}\sum_{k=1}^{K^c}(\gamma\ku^c P\ku^c+\gamma\kd^c P\kd^c)$ 
of any optimal solution to Problem~\ref{prob:2d_multi} converges to a constant
$Q_T$. The limit $Q_T$ has the following form:
\begin{align}
Q_T=\sum_{c=A,B,C} {\bar R}^c\Bigg(
&\int_{-D}^{D}\int_{\max\{|x|/\sqrt{3},\epsilon\}}^{d_{\theta^c}(x)}
{\cal F}
(x,y,\beta_1^c,Q_1^{\bar{c}},Q_1^{\bar{\bar{c}}},\xi^c)\:
d\lambda^c(x,y)+\nonumber\\ 
&\int_{-D}^{D}\int_{d_{\theta^c}(x)}^{\frac{2D-|x|}{\sqrt{3}}}
{\cal F}(x,y,\beta_2^c,0,0,0)\:d\lambda^c(x,y)\Bigg)\:,
\label{eq:limitQT_2d}
\end{align}
where
$\theta^c=(\beta_1^c,\beta_2^c,Q_1^{\bar{c}},Q_1^{\bar{\bar{c}}},\xi^c)$
and where for each $c=A,B,C$, the system of equation~\SinfTwo~is satisfied in
variables $\theta^c$, $Q_1^c$. Here, $(x,\theta)\mapsto d_{\theta}(x)$ is the
function defined by~\eqref{eq:separating_curve_fini}.\\
\indent Moreover, for each $c=A,B,C$ and for any arbitrary fixed value
$(Q_1^A,Q_1^B,Q_1^C)=$ $(\tilde{Q}_1^A$, $\tilde{Q}_1^B$, $\tilde{Q}_1^C)$, the
system of equation \SinfTwo~admits at most one solution $(\tilde{\beta}_1^c$,
$\tilde{\beta}_2^c$, $\tilde{\xi}^c)$.
\end{lemma}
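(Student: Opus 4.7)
The plan is to first establish the statement in the \emph{aligned} case (users of each cell lying on $I^{c,(K)}$ parallel equispaced lines), then lift the result to the general setting via the projection/perturbation argument sketched in the preceding Remark, and finally address uniqueness by monotonicity.

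\textbf{Aligned case.} Starting from Lemma~\ref{lem:multi_2d}, any optimal solution to Problem~\ref{prob:2d_multi} has the binary separation form in each cell, so $Q_1^{c,(K)}$ is expressed by \eqref{eq:finite_Q1_2d}. Since $\theta^{c,(K)}$ is bounded (Appendix~E in \cite{phd}), extract convergent subsequences with limits $\theta^c$. Because $r_k\leq r_{\max}$ and $y_k\geq \epsilon$, the integrand $\mathcal{F}$ is bounded and continuous on the relevant compact set. Weak convergence of $\nu^{(K)}$ to $\zeta^c\otimes\lambda^c$ (Assumption~\ref{ass:measure_convergence_2d}) together with the continuous mapping of $\theta^{c,(K)}\to\theta^c$ allows me to pass to the limit in \eqref{eq:fonctionMesure_2d} and recover \eqref{eq:Q1d_2}; the pivot-user contribution is $o_K(1)$ since there are at most $I^{c,(K)}$ such users and $I^{c,(K)}/K\to 0$. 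The same reasoning applied to \eqref{eq:finite_gamma1_2d} and its counterpart for $\mathcal{P}_c$ yields \eqref{eq:gamma1d_2} and \eqref{eq:gamma2d_2}, while applied to the protected power gives \eqref{eq:Q2d_2d}. Summing $Q_1^c+Q_2^c$ over $c=A,B,C$ produces the announced limit \eqref{eq:limitQT_2d}.

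\textbf{General (non-aligned) case.} I follow the Remark. In each cell define a grid of $I^{c,(K)}$ parallel equispaced lines perpendicular to $BC$, with $I^{c,(K)}\to\infty$ and $I^{c,(K)}/K\to 0$. Project the actual users onto the grid via two rules: one that moves each user to the closest line on the base-station side, and one that moves it to the closest line on the opposite side. Each projected configuration is aligned and thus covered by the first step; moreover, by the monotonicity of $\rho(x,y)$ with respect to distance, these two projected problems bracket $Q_T^{(K)}$ from above and below. The line spacing vanishes as $K\to\infty$, so by uniform continuity of $(x,y)\mapsto \mathcal{F}(x,y,\cdot)$ and of the constraint integrand on the truncated domain $[-D,D]\times[\epsilon,D]$, the two asymptotic limits coincide and also coincide with the limit of $Q_T^{(K)}$. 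The delicate point is to verify that the measures attached to the two projected configurations have the same weak limit $\nu^c$ and that the accumulation points of $\theta^{c,(K)}$ are not disturbed by the projection; this follows from the absolute continuity of $\lambda^c$ combined with the freedom in the choice of $I^{c,(K)}$ (growing slowly enough to neglect pivot users yet fast enough to make the spacing tend to zero).

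\textbf{Uniqueness for fixed $(Q_1^A,Q_1^B,Q_1^C)$.} Once $Q_1^{\bar c}$ and $Q_1^{\bar{\bar c}}$ are frozen, the system \SinfTwo~decouples across cells and reduces, within cell~$c$, to three equations in the three unknowns $(\beta_1^c,\beta_2^c,\xi^c)$. The separating curve $d_{\theta^c}$ depends on these unknowns only through the sign of $W_{\theta^c}(x,y)$, which is monotone in each of $\beta_1^c$, $\beta_2^c$ and $\xi^c$ (via the monotonicity of $F$ and $f^{-1}$). Adapting the uniqueness argument of Proposition~1 in \cite{tsp1} to the two-dimensional geometry, suppose two distinct triples satisfied \eqref{eq:gamma1d_2}-\eqref{eq:gamma2d_2} and the constraint \eqref{eq:Q1d_2} simultaneously; comparing the two separating curves shows that one of the three integral equations would be violated, a contradiction.

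\textbf{Main obstacle.} The crux is the second step: controlling the perturbation produced by the double projection. One needs to show that the bracketing gap between the upper and lower aligned problems vanishes uniformly, which requires simultaneously (i) line spacing $\to 0$ to bound the path-loss perturbation and (ii) $I^{c,(K)}/K\to 0$ to discard pivot-user contributions. Calibrating $I^{c,(K)}$ to meet both requirements, and propagating this through the nonlinear dependence of $d_{\theta^c}$ on the projected positions, is the technical heart of the proof.
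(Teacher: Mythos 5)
Your proposal follows essentially the same route as the paper: the aligned case is handled exactly as in Step~3 of Section~\ref{sec:optimal_asyp_analysis_2d_aligned} (Lemma~\ref{lem:multi_2d} plus boundedness of $\theta\cK$, subsequence extraction, weak convergence of $\nu\cK$ under Assumption~\ref{ass:measure_convergence_2d}, and the $o_K(1)$ pivot-user term), the general case is lifted via the same double-projection/perturbation argument sketched in the paper's Remark, and uniqueness is obtained by the same monotonicity adaptation of Proposition~1 of~\cite{tsp1}. The paper itself defers the full technical details (boundedness, control of the projection perturbation) to~\cite{phd}, and your identification of the projection-calibration step as the technical crux matches where that deferred work lies.
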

Lemma~\ref{lem:asymptotic_2d} states that the limit $Q_T$ of the total
transmit power can be computed once we have found a set of parameters
$\{\beta_1^c,\beta_2^c,Q_1^c,\xi^c\}_{c=A,B,C}$ that satisfy~\SinfTwo~in the
three cells $c=A,B,C$. However, these twelve parameters are
underdetermined by this system of nine equations. We are
nonetheless capable of finding $\{\beta_1^c,\beta_2^c,Q_1^c,\xi^c\}_{c=A,B,C}$
such that the above lemma holds. This can be done thanks to the fact that $Q_T$
is the limit of the transmit power of an \emph{optimal} solution to the joint
resource allocation problem. Therefore,
$\{\beta_1^c,\beta_2^c,Q_1^c,\xi^c\}_{c=A,B,C}$ can be chosen as
any set of parameters that satisfy the system of equation~\SinfTwo~in the
three cells~$A$, $B$, $C$ \emph{and} for which the total power $Q_T$ as given
by~\eqref{eq:limitQT_2d} is minimal. To that end, we propose
Algorithm~\ref{algo:asym_curves_2d} which performs an exhaustive search w.r.t
points $(Q_1^A,Q_1^B,Q_1^C)$ inside a certain search interval. In practice, the
set of points $(Q_1^A,Q_1^B,Q_1^C)$ probed by the above algorithm can be
determined by resorting to numerical methods. 
\subsection{Selection of Curves $\{d_{\textrm{subopt}}^c(.)\}_{c=A,B,C}$}
\label{sec:selection_d}

We now proceed to the relevant determination of the separating curves
$\{d_{\textrm{subopt}}^c(.)\}_{c=A,B,C}$ associated with the proposed allocation
algorithm (Algorithm~\ref{algo:suboptimal_2d}). We propose to set
$d_{\textrm{subopt}}^c(x)$ such that
$$
\forall x\in[-D,D],\: d_{\textrm{subopt}}^c(x)=d_{\theta^c}(x)\:, c=A,B,C\:.
$$
where $(\theta^c)_{c=A,B,C}$ is the output of
Algorithm~\ref{algo:asym_curves_2d} and where $(x,\theta)\mapsto d_{\theta}(x)$
is the function defined by~\eqref{eq:separating_curve_fini}.
\begin{remark}
Note that the asymptotic separating curves $d_{\theta^c}(.)$ do not depend on
the particular configuration of the cells, but rather on an asymptotic
description of the network \emph{i.e.,} on the average rate requirement
$\bar{R}^c$ and on the asymptotic distribution~$\lambda^c$ of users.
\end{remark} 
\begin{remark}
\label{rem:complexity}
Curves $d_{\theta^c}(.)$ can be set before the base stations are brought into
operation. They can also be updated once in a while if~$\bar{R}^c$
or~$\lambda^c$ are subject to changes. However, since such changes are typically
slow, computational complexity of determining $d_{\theta^c}(.)$ is not a major
issue. 
\end{remark}
\subsection{Asymptotic Optimality of the Proposed Algorithm}
\label{sec:asym_opt_2d}

Denote by $Q_{\textrm{subopt}}\K$ the total transmit power of 
Algorithm~\ref{algo:suboptimal_2d} in the case where the separating curves
$\{d_{\textrm{subopt}}^c(.)\}_{c=A,B,C}$, are selected using
Algorithm~\ref{algo:asym_curves_2d}. Recall the definition of
$Q_T\K$ as the total transmit power of an optimal solution to the multicell
resource allocation problem (Problem~\ref{prob:2d_multi}). The following theorem
states that Algorithm~\ref{algo:suboptimal_2d} is asymptotically optimal. Its
proof is provided in~\cite{phd}.
\begin{theo}
\label{the:subopt_2d}
Assume that the separating curves $\{d_{\textrm{subopt}}^c(.)\}_{c=A,B,C}$
are set such that $d_{\textrm{subopt}}^c(x)=d_{\theta^c}(x)$ for all
$x\in[-D,D]$, where $(\theta^c)_{c=A,B,C}$ is the output of
Algorithm~\ref{algo:asym_curves_2d} and where $(x,\theta)\mapsto d_{\theta}(x)$
is the function defined by~\eqref{eq:separating_curve_fini} for any $x\in[-D,D]$
and $\theta\in\mathbb{R}_+^5$.
The following equality holds:
\begin{equation*}
\lim_{K\to\infty} Q_{\textrm{subopt}}\K = \lim_{K\to\infty} Q_T\K=Q_T\:,
\end{equation*}
where $Q_T$ is the constant defined by Lemma~\ref{lem:asymptotic_2d}.
\end{theo}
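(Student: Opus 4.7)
The plan is to identify the limit of the suboptimal transmit power with the expression~\eqref{eq:limitQT_2d} provided by Lemma~\ref{lem:asymptotic_2d} for $Q_T$. The crucial structural observation is that, by inspection of~\eqref{eq:calF_2d} and~\eqref{eq:cal_G}, the integrands $\mathcal{F}$ and $\mathcal{G}$ depend on the pair $(\beta,\xi)$ only through the effective quantity $\beta/(1+\xi)$. Therefore, after the reparametrization $\tilde\beta_1^c := \beta_1^c/(1+\xi^c)$, the system~\SinfTwo~characterizing $Q_T$ becomes identical in form to the system that naturally arises as the $K\to\infty$ limit of the equations~\eqref{eq:simpa_2d}-\eqref{eq:simpb_2d}-\eqref{eq:beta_2_tilde_2d} governing Algorithm~\ref{algo:suboptimal_2d}, for which $\xi^c\equiv 0$.

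First, I would establish that for $K$ sufficiently large Problem~\ref{prob:opt_multi_2d} with $d_{\textrm{subopt}}^c(\cdot)=d_{\theta^c}(\cdot)$ is feasible, so that Algorithm~\ref{algo:ping_pong_2d} is well defined. Along any subsequence for which the optimal parameters $\theta\cK$ of Problem~\ref{prob:2d_multi} converge---as extracted in Lemma~\ref{lem:asymptotic_2d}---the optimal separating curves $d_{\theta\cK}(\cdot)$ converge to $d_{\theta^c}(\cdot)$. Consequently, the symmetric difference between the optimal user partition and the one induced by $d_{\theta^c}$ contains only $o(K^c)$ users, by absolute continuity of $\lambda^c$ (Assumption~\ref{ass:measure_convergence_2d}). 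A bounded-perturbation argument then produces an admissible allocation for Problem~\ref{prob:opt_multi_2d} by transferring the rate of these boundary users from $\mathcal{P}_c$ to $\mathcal{I}$ inside the (restricted) optimal allocation.

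Next, I would characterize the limit of the suboptimal parameters. By Lemma~\ref{lem:yates}, Algorithm~\ref{algo:ping_pong_2d} converges for each such $K$ to the unique fixed point of $\mathbf{I}$, while Algorithm~\ref{algo:protected_2d} produces the unique $\tilde\beta_2\cK$ satisfying~\eqref{eq:beta_2_tilde_2d}. Uniform boundedness of $(\tilde\beta_1\cK, \tilde\beta_2\cK, \tilde Q_1\cK)$ follows from arguments parallel to those used in the proof of Lemma~\ref{lem:asymptotic_2d}; extracting convergent subsequences and applying Assumption~\ref{ass:measure_convergence_2d} replaces the sums in~\eqref{eq:simpa_2d}-\eqref{eq:simpb_2d} and~\eqref{eq:beta_2_tilde_2d} by integrals against $\lambda^c$ over the regions separated by $d_{\theta^c}$, yielding exactly~\SinfTwo~with $\xi^c$ set to $0$. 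By the structural observation above, the triple $(\beta_1^c/(1+\xi^c),\beta_2^c,Q_1^c)$ read off from the output of Algorithm~\ref{algo:asym_curves_2d} is a solution of this limiting system; uniqueness of the solution---combining the last assertion of Lemma~\ref{lem:asymptotic_2d} with the integral version of the standard-interference-function argument of Lemma~\ref{lem:yates}---identifies every accumulation point of the suboptimal parameters. Substituting back into the total transmit power yields $\lim_{K\to\infty} Q_{\textrm{subopt}}\K = Q_T$ via~\eqref{eq:limitQT_2d}. The main obstacle lies precisely in this uniqueness step: one must verify that the standard-interference-function property survives when the discrete sums of Lemma~\ref{lem:yates} are replaced by integrals involving the implicitly defined curve $d_{\theta^c}(\cdot)$, and that the specific minimizer of~\eqref{eq:limitQT_2d} selected by Algorithm~\ref{algo:asym_curves_2d} is consistent with this unique limiting fixed point.
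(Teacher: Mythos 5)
The paper itself does not contain a proof of Theorem~\ref{the:subopt_2d}---it is deferred entirely to the thesis~\cite{phd}---so no line-by-line comparison is possible. That said, your architecture is the natural one given the machinery the paper sets up, and your central structural observation is correct and is indeed the reason the theorem holds: ${\cal F}$ and ${\cal G}$ in~\eqref{eq:calF_2d} and~\eqref{eq:cal_G} depend on $(\beta,\xi)$ only through $\beta/(1+\xi)$, so once the integration domains are frozen to the regions delimited by $d_{\theta^c}(.)$, the limiting system satisfied by the parameters of Algorithms~\ref{algo:ping_pong_2d} and~\ref{algo:protected_2d} (which carry no low-nuisance multiplier, \emph{i.e.,} $\xi=0$) is solved by $(\beta_1^c/(1+\xi^c),\beta_2^c,Q_1^c)$ read off from Lemma~\ref{lem:asymptotic_2d}. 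One point you should make explicit: the reparametrization does \emph{not} leave the separating function $W_\theta$ of~\eqref{eq:H_fun} invariant, because of the prefactor $g_1/(1+\xi)$ sitting \emph{outside} $F$; the argument therefore works only because the curve is fixed beforehand to $d_{\theta^c}$ computed with the true $\xi^c$, and the invariance is invoked solely for the integrands over that fixed domain. Your write-up uses this correctly but does not flag it.

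The two steps you identify as obstacles are genuine and are where the real work lies. Feasibility of Problem~\ref{prob:opt_multi_2d} for large $K$ cannot be borrowed from the paper (the remark following the theorem derives feasibility \emph{from} the theorem, so the proof must establish it independently); your perturbation argument---moving the $o(K)$ users lying between $d_{\theta\cK}(.)$ and $d_{\theta^c}(.)$ across bands at $o(1)$ power cost while preserving $\sum_{k}\gamma\ku^c=\alpha$---is the right idea but must be carried out, including a uniform-in-$K$ bound guaranteeing that the perturbed interference powers still admit a fixed point of $\mathbf{I}$. Likewise, uniqueness of the solution of the limiting system with \emph{fixed} domains is not covered by the last assertion of Lemma~\ref{lem:asymptotic_2d}, which concerns the system in which the domains move with the unknowns; you need to rerun the standard-interference-function argument of Lemma~\ref{lem:yates} with sums replaced by integrals (straightforward precisely because the domain no longer depends on the unknowns) together with the monotonicity of $\beta\mapsto{\cal G}(\cdot,\beta,\cdot)$ to pin down $\tilde\beta_1^c$ and $\beta_2^c$ for given interference powers. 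Modulo filling in these two steps and the sum-to-integral passage under Assumption~\ref{ass:measure_convergence_2d} (which mirrors the derivation of~\eqref{eq:Q1d_2}--\eqref{eq:gamma2d_2}), your outline is sound.
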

Note that the above theorem implies that $Q_{\textrm{subopt}}\K$ is bounded, at
least for sufficiently large $K$. This means that there exists
$K_0\in\mathbb{N}^*$ such that Problem~\ref{prob:opt_multi_2d} is feasible for
all $K\geq K_0$ (refer to Remark~\ref{rem:feasibility}).
\subsection{Selection of the Best Reuse Factor}
\label{sec:alpha}

During the cellular network design process, the selection of a relevant value
of $\alpha$ allowing to optimize the network performance is of crucial
importance. In practice, the reuse factor should be fixed prior to resource
allocation and it should be independent of the particular cells configuration.
Recall the definition of $Q_T\K=Q_T\K(\alpha)$ as the total transmit power
associated with an optimal solution to the resource allocation problem. We
define the optimal reuse factor as the value $\alpha_{\textrm{opt}}$ that
minimizes the asymptotic transmit power
$Q_T(\alpha)=\lim_{K\to\infty}Q_T\K(\alpha)$ (given by
Lemma~\ref{lem:asymptotic_2d})
\emph{i.e.,}
\begin{equation}\label{eq:alpha_opt}
\alpha_{\textrm{opt}}=\arg\min_{\alpha\in[0,1]}Q_T\:.
\end{equation}
In practice, $\alpha_{\textrm{opt}}$ can be obtained by computing $Q_T(\alpha)$
for different values of $\alpha$ in a grid. Note that computational complexity
is not an issue here (refer to Remark~\ref{rem:complexity}).
\section{Numerical Results}
\label{sec:simus}

In our simulations, we considered the classical ``free space propagation model''
with a carrier frequency $f_0=2.4 GHz$. Path loss in dB of user~$k$ in cell~$c$
($c=A,B,C$) is thus given by $\rho_k (dB) = 20\log_{10}(|ck|) + 100.04$, where
$|ck|$ stands for the distance between user~$k$ and base station~$c$. The
thermal noise power spectral density is equal to $N_{0} = -170$ dBm/Hz. Denote
by~$S$ the surface of any of the considered sectors of cells~$A,B,C$. Each one
of these three sectors is assumed to have the same uniform asymptotic
distribution $\lambda$ of users, where $d\lambda(x,y)= dx dy / S$. The average
rate requirement $\bar{R}^c$ in bits/s/Hz (defined in nats/sec/Hz
by~\eqref{eq:bar_r}) is assumed to be the same in each cell: $\bar{R}^A=$
$\bar{R}^B=$ $\bar{R}^C=$ $\bar{R}$.

\noindent {\bf Selection of the reuse factor}

\noindent In Figure~\ref{fig:asym_alpha_2d}, we plot $\alpha_{\textrm{opt}}$
defined by~\eqref{eq:alpha_opt} for different values of the average rate
$\bar{R}$. As expected, $\alpha_{\textrm{opt}}$ is decreasing with respect to
$\bar{R}$. Indeed, the larger the value $\bar{R}$, the higher the level of
interference, and the greater the number of users that should be assigned
protected subcarriers.

\noindent {\bf Selection of separating curves
$\boldsymbol{\{d_{\textrm{subopt}}^c(.)\}_{c=A,B,C}}$}

\noindent Once the reuse factor is set to the value $\alpha_{\textrm{opt}}$, the
separating curves $\{d_{\textrm{subopt}}^c(.)\}_{c=A,B,C}$ associated with the
proposed suboptimal allocation algorithm should be chosen to be equal to the
asymptotic optimal curves $\{d_{\theta^c}(.)\}_{c=A,B,C}$ given by
Lemma~\ref{lem:asymptotic_2d}. Since we are considering the case where the
asymptotic distribution of users is the same in the three sectors,
Algorithm~\ref{algo:asym_curves_2d} yielded in all our simulations three
identical separating curves \emph{i.e.,} for all $x\in[-D,D]$,
$d_{\textrm{subopt}}^A(x)=d_{\textrm{subopt}}^B(x)=d_{\textrm{subopt}}^C(x)$.
Figure~\ref{fig:asym_d_2d} plots $d_{\textrm{subopt}}^A(.)$ for different values
of the average rate $\bar{R}$. 

\noindent {\bf Performance of the proposed allocation algorithm}

\noindent From now on, the positions of users in each sector are assumed to be
uniformly distributed random variables. We also assume that all users have the
same target rate, and that $K^A=K^B=K^C$. In the sequel,
$r_T=\sum_{k=1}^{K^c}r_k$ designates the sum rate per sector measured in bits/s.
Let us study the performance of the proposed allocation algorithm
(Algorithm~\ref{algo:suboptimal_2d}) in the case where the separating curves
$\{d_{\textrm{subopt}}^c(.)\}_{c=A,B,C}$ are selected as in
Subsection~\ref{sec:selection_d} (see Figure~\ref{fig:asym_d_2d}). 

We first validate the asymptotic optimality of
Algorithm~\ref{algo:suboptimal_2d}. To that end, we consider 5 values of the
number~$K$ of users comprised between 30 and 300. For each one of these values,
the system bandwidth~$B=B(K)$ is chosen such that $K/B=t=15\times 10^{-6}$. For
example, the bandwidth is equal to~5 MHz when $K=75$ \emph{i.e.,} when
$K^A=K^B=K^C=25$. This way, the number of users increases in accordance with the
description of the asymptotic regime given earlier in
Section~\ref{sec:optimal_asyp_analysis_2d_aligned}. Next, we compute the
transmit powers $Q_{\textrm{subopt}}\K$ spent when
Algorithm~\ref{algo:suboptimal_2d} is applied for a large number of realizations
of the random positions of users. We finally evaluate the associated mean value
$\mathbb{E}\left[Q_{\textrm{subopt}}\K\right]$ (expectation is taken w.r.t the
random positions of users) and compare it with the asymptotic \emph{optimal}
transmit power $Q_T=\lim_{K\to\infty}Q_T\K$ as given by
Lemma~\ref{lem:asymptotic_2d}. The results of this comparison are illustrated in
Figure~\ref{fig:asym_error_2d}. Note that the difference between
$Q_{\textrm{subopt}}\K$ and~$Q_T$ decreases with the number of users. This
difference can be considered negligible even for a moderate number of users
equal to 50 per sector. This sustains that the proposed allocation algorithm is
asymptotically optimal.

From now on, the system bandwidth~$B$ is equal to~5 MHz and the number of users
per sector is fixed to 25. In Figure~\ref{fig:sub_opt_2d}, we compare the
proposed algorithm with the allocation scheme introduced
in~\cite{papandriopoulos}. In the latter work, the authors set the value of the
reuse factor~$\alpha$ to one \emph{i.e.,} all the available subcarriers are
reused in all the cells. The resource allocation problem they address consists
(as in our paper) in minimizing the total power that should be spent by the
network in order to achieve all users' rate requirements. In this context, they
propose a distributed iterative allocation algorithm similar to
Algorithm~\ref{algo:ping_pong_2d}. The main difference is that, while
Algorithm~\ref{algo:ping_pong_2d} is only applied to a subset
$\mathcal{K}_I^A\cup\mathcal{K}_I^B\cup\mathcal{K}_I^C$ of users, the algorithm
of~\cite{papandriopoulos} is applied to all the users in the network. As a
matter of fact, this difference has no significant effect on the computational
complexity of the scheme of~\cite{papandriopoulos}, which is also of order
$O(N_{\textrm{iter}} N_{\textrm{grad}} K)$ as 
Algorithm~\ref{algo:ping_pong_2d} (see Subsection~\ref{sec:complexity}).
However, Figure~\ref{fig:sub_opt_2d} shows that for all the different values of
the sum rate $r_T$, considerable gains can be achieved by applying our resource
allocation scheme instead of that of~\cite{papandriopoulos} without any
additional computational complexity.

\noindent {\bf Convergence rate of Algorithm~\ref{algo:ping_pong_2d}}

\noindent We plot in Figure~\ref{fig:convergence} the number $N_{\textrm{iter}}$
of iterations of Algorithm~\ref{algo:ping_pong_2d} as a function of the required
accuracy \emph{i.e.,} the maximum relative change in the transmit powers
$Q_1^c$ ($c=A,B,C$) from iteration to another beyond which convergence of the
algorithm is achieved. Figure~\ref{fig:convergence} shows that
Algorithm~\ref{algo:ping_pong_2d} converges quickly within a very good accuracy
even for a sum rate as high as 9 Mbps.

\noindent {\bf Performance of the proposed allocation algorithm in the discrete
case}

\noindent We now address the so-called \emph{discrete} case where the sharing
factors should be integer multiples of $1/N$. In this context, we propose the
following approach to compute the resource allocation parameters. We first
apply Algorithm~\ref{algo:suboptimal_2d} to obtain the continuous-valued sharing
factors $\gamma\ku^c$ and $\gamma\kd^c$ for $c=A,B,C$. Next, we round the number
of assigned subcarriers $\gamma\ku^c N$ and $\gamma\kd^c N$ to the nearest
smaller integer. In order to compensate for the slight decrease of each sharing
factor due to rounding, the power allocated to each user should be slightly
increased so as to keep the same achievable rate. To that end, the power
allocation should be recomputed (this time, keeping fixed sharing factors).
This can be achieved by straightforward adaptation of
Algorithms~\ref{algo:ping_pong_2d} and~\ref{algo:protected_2d}.
In Figure~\ref{fig:discrete}, we plot the required transmit power in the 
discrete case assuming the following values of the total number~$N$ of
subcarriers: $N= 72, 192, 360$ as recommended in WiMax~\cite{wimax_book}.
Figure~\ref{fig:discrete} shows that our allocation algorithm continues to do
relatively well even after rounding the sharing factors, provided that the
total number of subcarriers is moderately large enough.

\noindent {\bf Performance of the proposed allocation algorithm in larger
networks}

\noindent We now turn our attention to the 21-sector network of
Figure~\ref{fig:12_BS}. This network is composed of 7 duplicates of the 3-sector
system of Figure~\ref{fig:3cells_2D_model}. Note from Figure~\ref{fig:12_BS}
that the subcarriers of subsets $\mathcal{P}_A$, $\mathcal{P}_B$,
$\mathcal{P}_C$ are no more interference-free. However, the number of
interferers for users modulating in these subsets is always smaller than the
number of interferers for users modulating in subset~$\mathcal{I}$.

In this context, we propose the following procedure. We first fix the separating
curve $\{d_{\textrm{subopt}}^c(.)\}$ in each sector~$c$ and the reuse factor
$\alpha$ to the values given by Sections~\ref{sec:selection_d}
and~\ref{sec:alpha} respectively \emph{i.e.,} as if the network were composed of
only three sectors. Note that Algorithm~\ref{algo:protected_2d} cannot be
applied anymore since the users outside the curves
$\{d_{\textrm{subopt}}^c(.)\}$ are now subject to multicell interference.
Instead, we apply a straightforward adaptation of
Algorithm~\ref{algo:ping_pong_2d} to the case of more than three sectors. In
Figure~\ref{fig:sub_opt_12}, we plot both the transmit power of the
above proposed algorithm and that of the distributed and iterative allocation
scheme of~\cite{papandriopoulos} (both averaged w.r.t the random positions of
users) assuming a 21-sector setting. We note from Figure~\ref{fig:sub_opt_12}
that, while the scheme of~\cite{papandriopoulos} fails to converge for sum
rates~$r_T$ larger or equal to 9 Mbps, our allocation algorithm converges in all
the considered cases. It furthermore results in considerably smaller transmit
powers. However, comparing Figures~\ref{fig:sub_opt_2d} and~\ref{fig:sub_opt_12}
reveals that the transmit power of the proposed
algorithm is significantly larger in the 21-sector setting than in the 3-sector
setting. Reducing this gap requires a large amount of research and is out of
the scope of this paper.

\section{Conclusions}

In this paper, we addressed the problem of resource allocation for the downlink
of a sectorized OFDMA network assuming fractional frequency reuse and
statistical CSI. In this context, we proposed a practical resource allocation
algorithm that can be implemented in a distributed manner. The proposed
algorithm divides users of each cell into two groups which are geographically
separated by a fixed curve: Users of the first group are constrained to
interference-free subcarriers, while users of the second are constrained to
subcarriers subject to interference. If the aforementioned separating curves are
relevantly chosen, then the transmit power of this simple algorithm tends, as
the number of users grows to infinity, to the same limit as the minimal power
required to satisfy all users' rate requirements. Therefore, the simple scheme
consisting in separating users beforehand into protected and non protected users
is asymptotically optimal. This scheme is frequently used in cellular systems,
but it has never been proved optimal in any sense to the best of our knowledge.
Finally, we proposed a method to select a relevant value of the reuse factor.
The determination of this factor is of great importance for the dimensioning of
wireless networks.
\appendices
%
\section{Proof of Lemma~\ref{lem:multi_2d}}
\label{app:lem_multi}

\noindent {\sl Notations.}  
In the sequel, $\mathbf{x}_{ABC}$ represents a vector of multicell allocation
parameters such that 
${\bf x}_{ABC}=[{{\bf x}_A}^T,{{\bf x}_B}^T,{{\bf x}_B}^T]^T$ where
$\mathbf{x}_A=[(\mathbf{P}^A)^T,(\boldsymbol{\gamma}^A)^T]^T$, 
$\mathbf{x}_B=[(\mathbf{P}^B)^T,(\boldsymbol{\gamma}^B)^T]^T$ and 
$\mathbf{x}_C=[(\mathbf{P}^C)^T,(\boldsymbol{\gamma}^C)^T]^T$,
and where for each $c=A,B,C$, $\mathbf{P}^c=[P_{1,1}^c,P_{1,2}^c,\ldots,
P_{K^c,1}^c,P_{K^c,2}^c]^{T}$ and
$\boldsymbol{\gamma}=[\gamma_{1,1}^c,\gamma_{1,2}^c,\ldots$, $\gamma_{K^c,1}^c$,
$\gamma_{K^c,2}^c]^{T}$. We respectively denote by
$Q_1(\mathbf{x}_c)=\sum_k \gamma\ku^c P\ku^c$ and $Q_2(\mathbf{x}_c)=\sum_k
\gamma\kd^c P\kd^c$ the powers transmitted by base station~$c$ in the
interference subset~$\cal I$ and in the protected subset~${\cal P}_c$. When
resource allocation ${\bf x}_{ABC}$ is used, the total power transmitted by the
network is equal to 
$Q({\bf x}_{ABC})=\sum_c Q_1(\mathbf{x}_c)+Q_2(\mathbf{x}_c)$.
Recall that Problem~\ref{prob:2d_multi} is nonconvex. It cannot be solved using
classical convex optimization methods. Denote by 
${\bf x}_{ABC}^*=[{{\bf x}_A^*}^T,{{\bf x}_B^*}^T,{{\bf x}_C^*}^T]^T$ any global
solution to Problem~\ref{prob:2d_multi}.
\smallskip

\noindent {\bf Characterizing ${\bf x}_{ABC}^*$ via single cell results}.

From ${\bf x}_{ABC}^*$ we construct a new vector ${\bf x}_{ABC}$ which is as
well a global solution and which admits a ``binary'' form: for each cell~$c$,
$\gamma\ku^c=0$ if $y_k>d_{\theta_c}(x_k)$ and $\gamma\kd^c=0$ if
$y_k<d_{\theta_c}(x_k)$, for a certain curve $d_{\theta_c}(.)$. For cell~$A$,
vector ${\bf x}_A$ is defined as a global solution to the \emph{single cell}
Problem~\ref{prob:single_2d} when 
\begin{itemize}
\item[a)] the admissible nuisance constraint $\cal Q$ is set to 
${\cal Q}=Q_1({\bf x}_A^*)$,
\item[b)] the gain-to-interference-plus-noise-ratio in subset~$\cal I$ is set
to $g\ku=g\ku\left( Q_1({\bf x}_{B}^*), Q_1({\bf x}_{C}^*) \right)$.
\end{itemize}
Vectors ${\bf x}_B$ and ${\bf x}_C$ are defined similarly, by replacing $A$ by
$B$ or $C$ in the above definition. Denote by 
${\bf x}_{ABC} =[{{\bf x}_A}^T,{{\bf x}_B}^T,{{\bf x}_C}^T]^T$ the
allocation obtained by the above procedure. The following claim holds.
\begin{claim}
Resource allocation parameters ${\bf x}_{ABC}$ and ${\bf x}_{ABC}^*$ coincide:
${\bf x}_{ABC}={\bf x}_{ABC}^*$.
\end{claim}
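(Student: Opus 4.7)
The plan is to combine two facts: (i) $\mathbf{x}_{ABC}$ is feasible for the joint Problem~\ref{prob:2d_multi} with total power at most $Q(\mathbf{x}_{ABC}^*)$, and (ii) $\mathbf{x}_c^*$ is itself feasible for each single cell problem that produces $\mathbf{x}_c$. The global optimality of $\mathbf{x}_{ABC}^*$ in the multicell problem, combined with the uniqueness statement in Lemma~\ref{lem:single_2d}, will then force $\mathbf{x}_{ABC}=\mathbf{x}_{ABC}^*$ cell by cell.

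First I would check feasibility of $\mathbf{x}_{ABC}$ for Problem~\ref{prob:2d_multi}. The sharing-factor constraints $\sum_k\gamma_{k,1}^c=\alpha$ and $\sum_k\gamma_{k,2}^c=(1-\alpha)/3$ and the positivity constraints hold in each cell by construction (they are built into Problem~\ref{prob:single_2d}). The only delicate point is the rate constraint $R_k\leq C_k$ \emph{with the actual multicell interference produced by $\mathbf{x}_{ABC}$}, because Problem~\ref{prob:single_2d} used in cell $A$ treats the interference as if it came from $Q_1(\mathbf{x}_B^*),Q_1(\mathbf{x}_C^*)$ rather than $Q_1(\mathbf{x}_B),Q_1(\mathbf{x}_C)$. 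However, the nuisance constraint $\mathbf{C6}$ in each neighbouring cell guarantees $Q_1(\mathbf{x}_{\bar c})\leq Q_1(\mathbf{x}_{\bar c}^*)$, and since $g_{k,1}$ is decreasing in each of its two arguments and $C_k$ in~\eqref{eq:2d_ergodic_capacity} is increasing in $g_{k,1}$, passing from the assumed interference level to the actual (smaller) one can only raise $C_k$. The rate constraints are therefore preserved.

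Next I would compare total powers. Observe that $\mathbf{x}_c^*$ is itself feasible for the single cell Problem~\ref{prob:single_2d} associated with cell $c$: it satisfies the rate constraints by global optimality of $\mathbf{x}_{ABC}^*$, it satisfies the sharing and positivity constraints, and it trivially meets $Q_1(\mathbf{x}_c^*)\leq Q_1(\mathbf{x}_c^*)=\mathcal{Q}$. Since $\mathbf{x}_c$ is defined as an optimum of that single cell problem,
\[
Q_1(\mathbf{x}_c)+Q_2(\mathbf{x}_c)\leq Q_1(\mathbf{x}_c^*)+Q_2(\mathbf{x}_c^*),\qquad c=A,B,C.
\]
Summing over $c$ gives $Q(\mathbf{x}_{ABC})\leq Q(\mathbf{x}_{ABC}^*)$. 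Combined with the feasibility established above and the global optimality of $\mathbf{x}_{ABC}^*$, this forces $Q(\mathbf{x}_{ABC})=Q(\mathbf{x}_{ABC}^*)$, and therefore each of the three per-cell inequalities must hold with equality.

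Finally, equality in the per-cell inequality says that $\mathbf{x}_c^*$ also \emph{attains} the single cell minimum. Lemma~\ref{lem:single_2d} asserts that this minimum is attained by a \emph{unique} allocation, hence $\mathbf{x}_c^*=\mathbf{x}_c$ for every $c\in\{A,B,C\}$, which is the desired conclusion. The main obstacle, as I see it, is the feasibility step: one must carefully track that lowering the interference produced by neighbouring base stations can only help a given cell, which rests on the monotonicity of $g_{k,1}(\cdot,\cdot)$ and of $C_k$ in its GINR argument. The rest of the argument is a clean sandwich between the multicell and single cell optima, closed off by invoking single cell uniqueness.
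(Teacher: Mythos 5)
Your proof is correct and follows essentially the same route as the paper's: feasibility of ${\bf x}_{ABC}$ for the multicell problem via the low nuisance constraint, the power sandwich $Q({\bf x}_{ABC})\leq Q({\bf x}_{ABC}^*)$ forced to equality cell by cell, and then the uniqueness in Lemma~\ref{lem:single_2d} applied to ${\bf x}_c^*$ as an alternative single cell minimizer. The only difference is presentational: you make explicit the monotonicity of $C_k$ in the GINR that underlies the feasibility step, which the paper leaves as ``straightforward.''
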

\begin{proof}
It is straightforward to show that ${\bf x}_{ABC}$ is a feasible point for the
joint multicell problem (Problem~\ref{prob:2d_multi}) in the sense that
constraints $\bf C1$-$\bf C4$ of Problem~\ref{prob:2d_multi} are met. This is
the consequence of the low nuisance constraint $Q_1({\bf x}_c) \leq Q_1({\bf
x}_c^*)$ which ensures that the interference which is \emph{produced} by each
base station when using the new allocation ${\bf x}_{ABC}$ is no bigger than
the interference produced when the initial allocation ${\bf x}_{ABC}^*$ is used.
Second, it is straightforward to show that ${\bf x}_{ABC}$ is a global solution
to the multicell problem (Problem~\ref{prob:2d_multi}). Indeed, the power
$Q_1(\mathbf{x}_c)+Q_2(\mathbf{x}_c)$ spent by base station~$c$ is necessarily
less than the initial power $Q_1(\mathbf{x}_c^*)+Q_2(\mathbf{x}_c^*)$ \emph{by
definition} of the minimization Problem~\ref{prob:single_2d}. Thus 
$Q({\bf x}_{ABC})\leq Q({\bf x}_{ABC}^*)$. Of course, as ${\bf x}_{ABC}^*$ has
been chosen itself as a global minimum of $Q$, the latter inequality should hold
with equality: $Q({\bf x}_{ABC})= Q({\bf x}_{ABC}^*)$. Therefore, 
${\bf x}_{ABC}^*$ and ${\bf x}_{ABC}$ are both global solutions to the multicell
problem (Problem~\ref{prob:2d_multi}). As an immediate consequence, inequality
$Q_1(\mathbf{x}_c)+Q_2(\mathbf{x}_c)\leq
Q_1(\mathbf{x}_c^*)+Q_2(\mathbf{x}_c^*)$ holds with equality in all the three
cells~$c=A,B,C$:
\begin{equation}
Q_1(\mathbf{x}_c)+Q_2(\mathbf{x}_c)= Q_1(\mathbf{x}_c^*)+Q_2(\mathbf{x}_c^*)\:.
\label{eq:xetoilemin}
\end{equation}
Clearly, ${\bf x}_A^*$ is a feasible point for Problem~\ref{prob:single_2d} when
setting ${\cal Q} = Q_1({\bf x}_A^*)$ and  $g\ku = g\ku\big(Q_1({\bf x}_B^*)$,
$Q_1({\bf x}_C^*)\big)$. Indeed constraint $\bf C6$ is equivalent to $Q_1({\bf
x}_A^*) \leq {\cal Q}$ and is trivially met (with equality) by definition of
$\cal Q$. Since the objective function $Q_1(\mathbf{x}_A^*)+Q_2(\mathbf{x}_A^*)$
coincides with the global minimum as indicated by~(\ref{eq:xetoilemin}), ${\bf
x}_A^*$ is a global minimum for the single cell Problem~\ref{prob:single_2d}. By
Lemma~\ref{lem:single_2d}, this problem admits a unique global minimum ${\bf
x}_A$. Therefore, ${\bf x}_A^*={\bf x}_A$. By similar arguments, 
${\bf x}_B^*={\bf x}_B$ and ${\bf x}_C^*={\bf x}_C$.
\end{proof}
We thus conclude that any global solution ${\bf x}_{ABC}^*$ to the multicell
Problem~\ref{prob:2d_multi} satisfies equations~\eqref{eq:allocinf_2d},
\eqref{eq:allocsup_2d}, where $g\ku$ in the latter equations coincide with
$g\ku=g\ku\left(Q_1({\bf x}_{\bar c}^*), Q_1({\bf x}_{\bar{\bar{c}}}^*)\right)$,
and where for each cell~$c\in\{A,B,C\}$, $\bar{c}$ and $\bar{\bar{c}}$ denote
the other two cells. The proof of Lemma~\ref{lem:multi_2d} is thus complete.

\newpage

\begin{figure}[h]
\centering
\includegraphics[width=7cm]{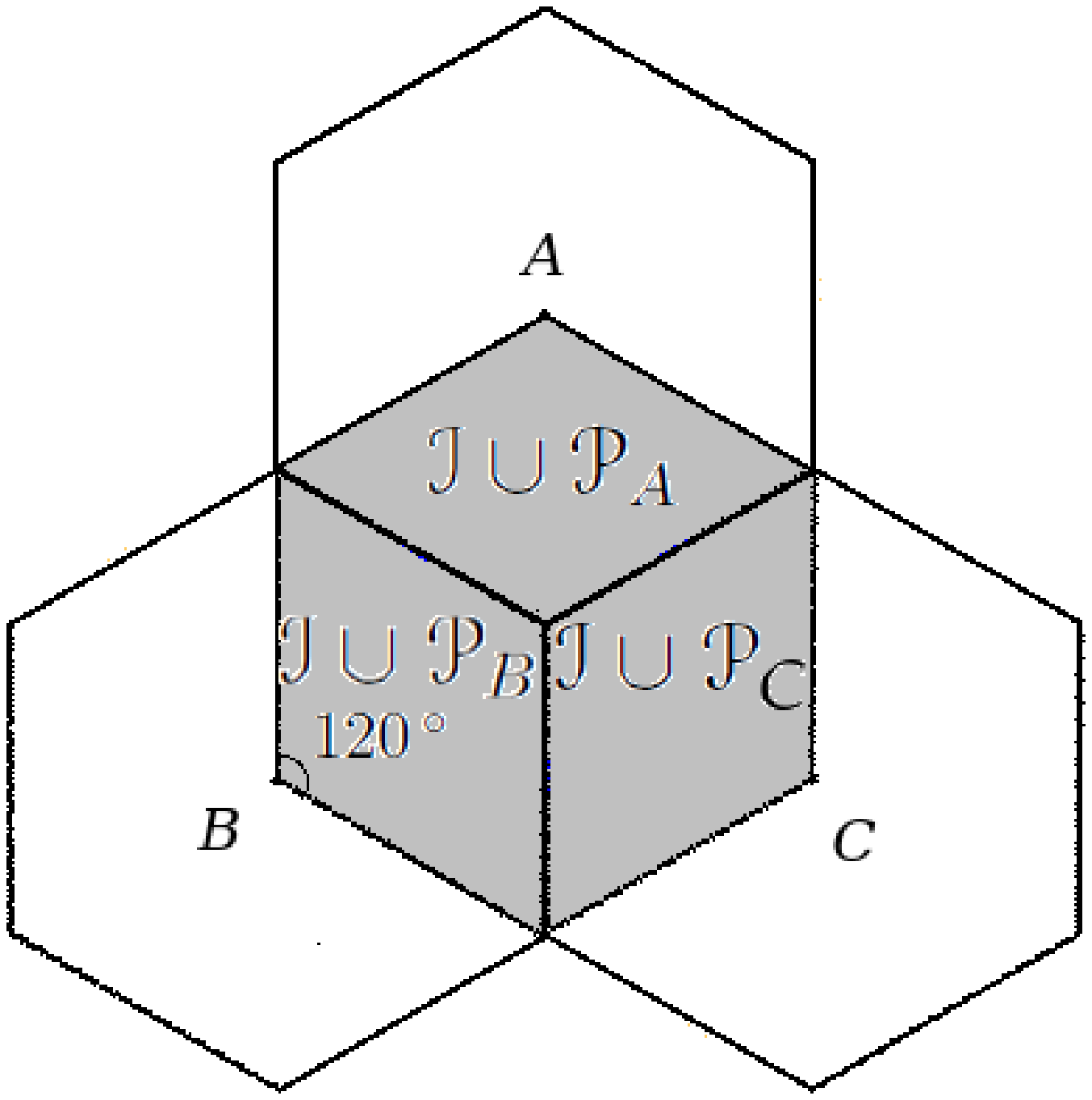}
\caption{3-cells system model and the frequency reuse scheme}
\label{fig:3cells_2D_model}
\end{figure}

\begin{figure}[h]
\centering
\includegraphics[width=12cm]{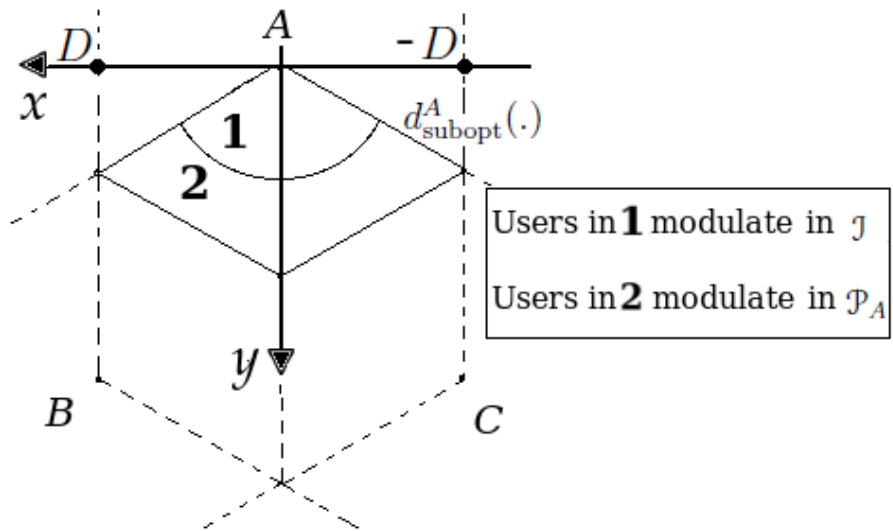}
\caption{Fixed separating curve in cell $A$}
\label{fig:subopt_2d}
\end{figure}

\begin{algorithm}
\caption{Ping-pong algorithm for three interfering cells}
\label{algo:ping_pong_2d}
\begin{algorithmic}
\STATE {\bf Initialization:} $Q_1^A \leftarrow 0$, $Q_1^B \leftarrow 0$, 
$Q_1^C \leftarrow 0$
\REPEAT
\STATE $(\beta_1^A,Q_1^A) \leftarrow$
Solve~(\ref{eq:simpa_2d})-(\ref{eq:simpb_2d}) for $c=A$
\STATE $(\beta_1^B,Q_1^B) \leftarrow$
Solve~(\ref{eq:simpa_2d})-(\ref{eq:simpb_2d}) for $c=B$
\STATE $(\beta_1^C,Q_1^C) \leftarrow$
Solve~(\ref{eq:simpa_2d})-(\ref{eq:simpb_2d}) for $c=C$
\UNTIL{convergence}
\FORALL {$c=A,B,C$}
\STATE $\{\gamma\ku^c,P\ku^c\}_{k\in\mathcal{K}_I^c}\leftarrow$
\eqref{eq:resSimpa_2d}-\eqref{eq:resSimpb_2d}
\ENDFOR
\RETURN $\{\gamma\ku^c,P\ku^c\}_{c=A,B,C,\:k\in\mathcal{K}_I^c}$
\end{algorithmic}
\end{algorithm}

\begin{algorithm}
\caption{Resource allocation for protected users}
\label{algo:protected_2d}
\begin{algorithmic}
\FORALL {$c=A,B,C$}
\STATE $\beta_2^c \leftarrow$ Solve~\eqref{eq:beta_2_tilde_2d} 
\FORALL {$k\in\mathcal{K}_P^c$}
\STATE $P\kd^c \leftarrow$ \eqref{eq:power_protected_2d}
\STATE $\gamma\kd^c \leftarrow$ \eqref{eq:gamma_protected_2d} 
\ENDFOR
\ENDFOR
\RETURN $\{\gamma\kd^c,P\kd^c\}_{c=A,B,\:k\in\mathcal{K}_P^c}$
\end{algorithmic}
\end{algorithm}

\begin{algorithm}
\caption{Proposed resource allocation algorithm}
\label{algo:suboptimal_2d}
\begin{algorithmic}
\FORALL {$c=A,B,C$}
\STATE $\mathcal{K}_P^c \leftarrow \{k\in\{1 \ldots K^c\}\: |\: y_k >
d_{\textrm{subopt}}^c(x_k)\}$
\STATE $\mathcal{K}_I^c \leftarrow \{k\in\{1 \ldots K^c\}\: |\: y_k \leq
d_{\textrm{subopt}}^c(x_k)\}$
\ENDFOR
\STATE
$\{\gamma\ku^c, P\ku^c\}_{c=A,B,C,\: k\in\mathcal{K}_I^c} \leftarrow$
Algorithm~\ref{algo:ping_pong_2d}
\STATE
$\{\gamma\kd^c, P\kd^c\}_{c=A,B,c,\: k\in\mathcal{K}_P^c} \leftarrow$
Algorithm~\ref{algo:protected_2d}
\RETURN 
$\{\gamma\ku^c, P\ku^c, \gamma\kd^c, P\kd^c\}_{c=A, B, C,\: k=1 \ldots K^c}$
\end{algorithmic}
\end{algorithm}



\begin{figure}[h]
\centering
\includegraphics[width=12cm]{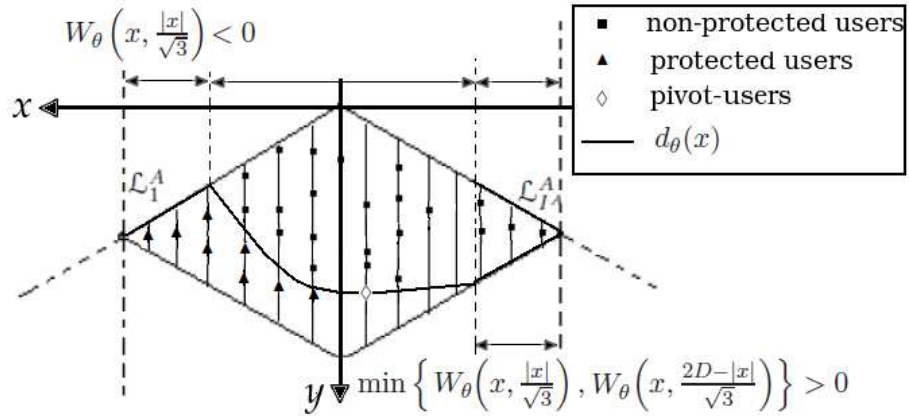}
\caption{Definition of subsets $\{\mathcal{L}_i^A\}_{i=1 \ldots I^A}$ and
of the curve $d_{\theta}(.)$}
\label{fig:finite_separating_curve}
\end{figure}

\begin{algorithm}
\caption{Determination of
$\{\theta^c\}_{c=A,B,C}$}
\label{algo:asym_curves_2d}
\begin{algorithmic}
\FORALL {$(Q_1^A,Q_1^B,Q_1^C)$}
\FOR {$c=A,B,C$}
\IF {\SinfTwo~admits a solution}
\STATE $(\beta_1^c, \beta_2^c, \xi^c)\leftarrow$
unique solution to~\SinfTwo
\STATE $\theta^c\leftarrow
(\beta_1^c,\beta_2^c,Q_1^{\bar{c}},Q_1^{\bar{\bar{c}}},\xi^c)$
\STATE 
$Q^c\leftarrow Q_1^c + {\bar R}^c \int_{-D}^{D}
\int_{d_{\theta^c}(x)}^{\frac{2D-|x|}{\sqrt{3}}}
{\cal F}(x,y,\beta_2^c,0,0,0)\:d\lambda^c(x,y)$
\ELSE
\STATE $Q^c\leftarrow\infty$
\ENDIF
\ENDFOR
\STATE
$Q_T(Q_1^A,Q_1^B,Q_1^C)\leftarrow\sum_{c=A,B,C} Q^c$
\ENDFOR
\STATE $(Q_1^{A},Q_1^{B},Q_1^{C})\leftarrow
\arg\min_{(\tilde{Q}_1^A,\tilde{Q}_1^B,\tilde{Q}_1^C)}Q_T(\tilde{Q}_1^A,
\tilde{Q}_1^B,\tilde{Q}_1^C)$
\FOR {$c=A,B,C$}
\STATE $\theta^c\leftarrow
(\beta_1^c,\beta_2^c,Q_1^{\bar{c}},Q_1^{\bar{\bar{c}}},\xi^c)$
\ENDFOR
\RETURN $\theta^A, \theta^B,\theta^C$
\end{algorithmic}
\end{algorithm}

\begin{figure}[h]
\centering
\includegraphics[width=12cm]{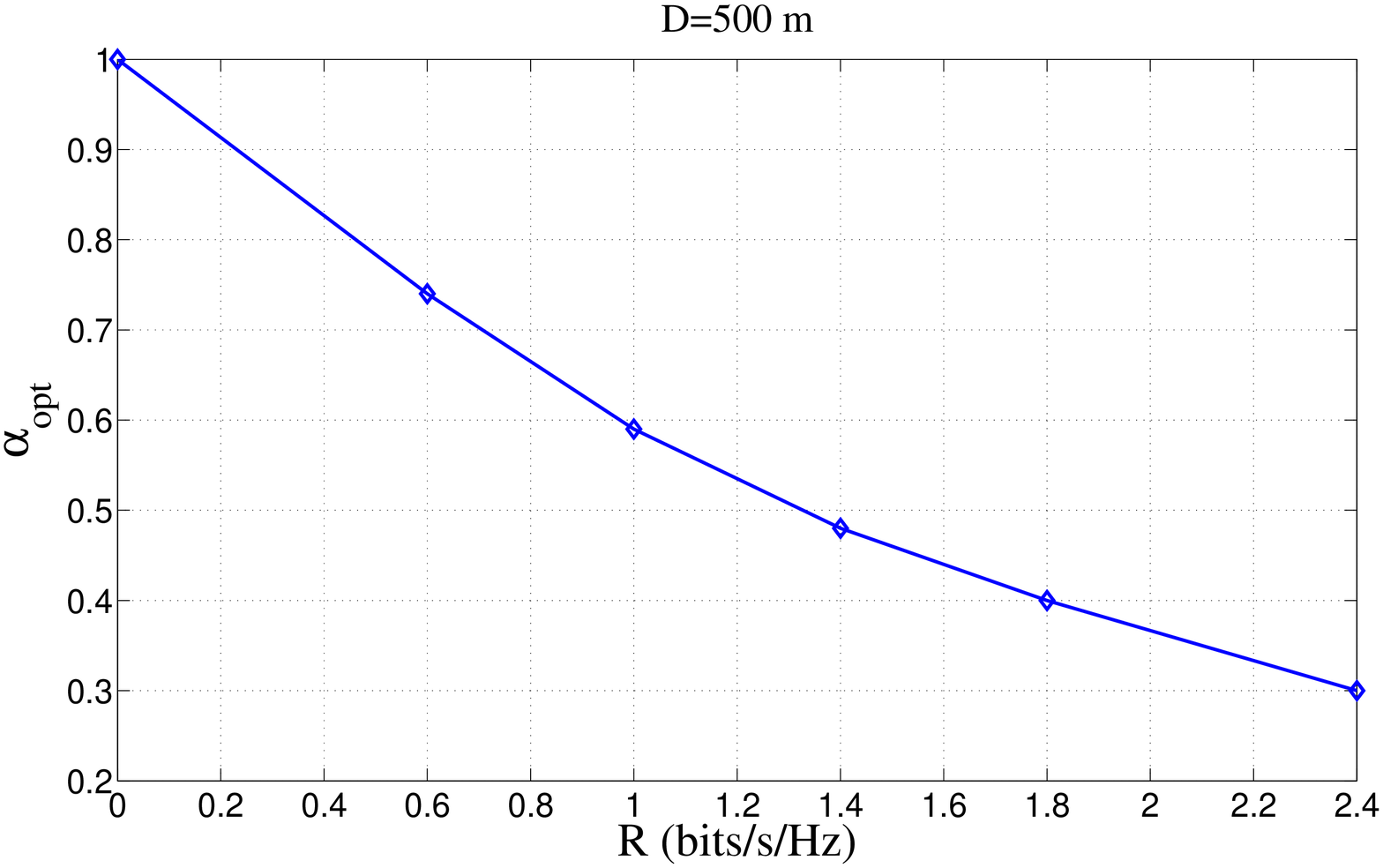}
\caption{Optimal reuse factor vs. average rate of a sector}
\label{fig:asym_alpha_2d}
\end{figure}

\begin{figure}[h]
\centering
\includegraphics[width=12cm]{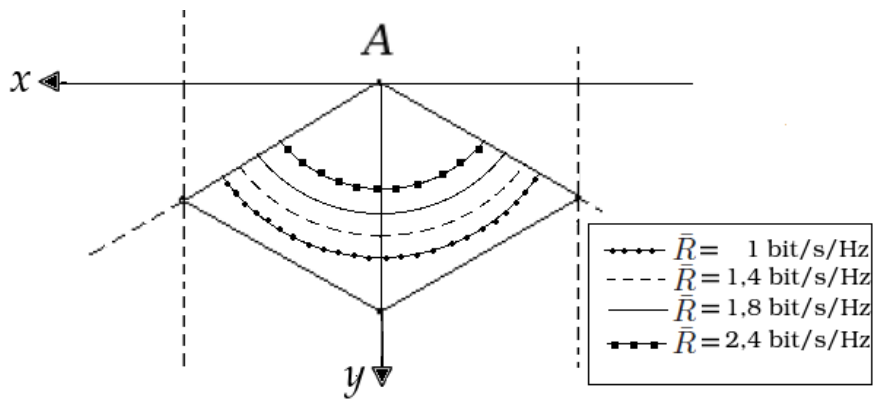}
\caption{Optimal separating curve $d_{\theta^A}(.)$}
\label{fig:asym_d_2d}
\end{figure}

\begin{figure}[h]
\centering
\includegraphics[width=12cm]{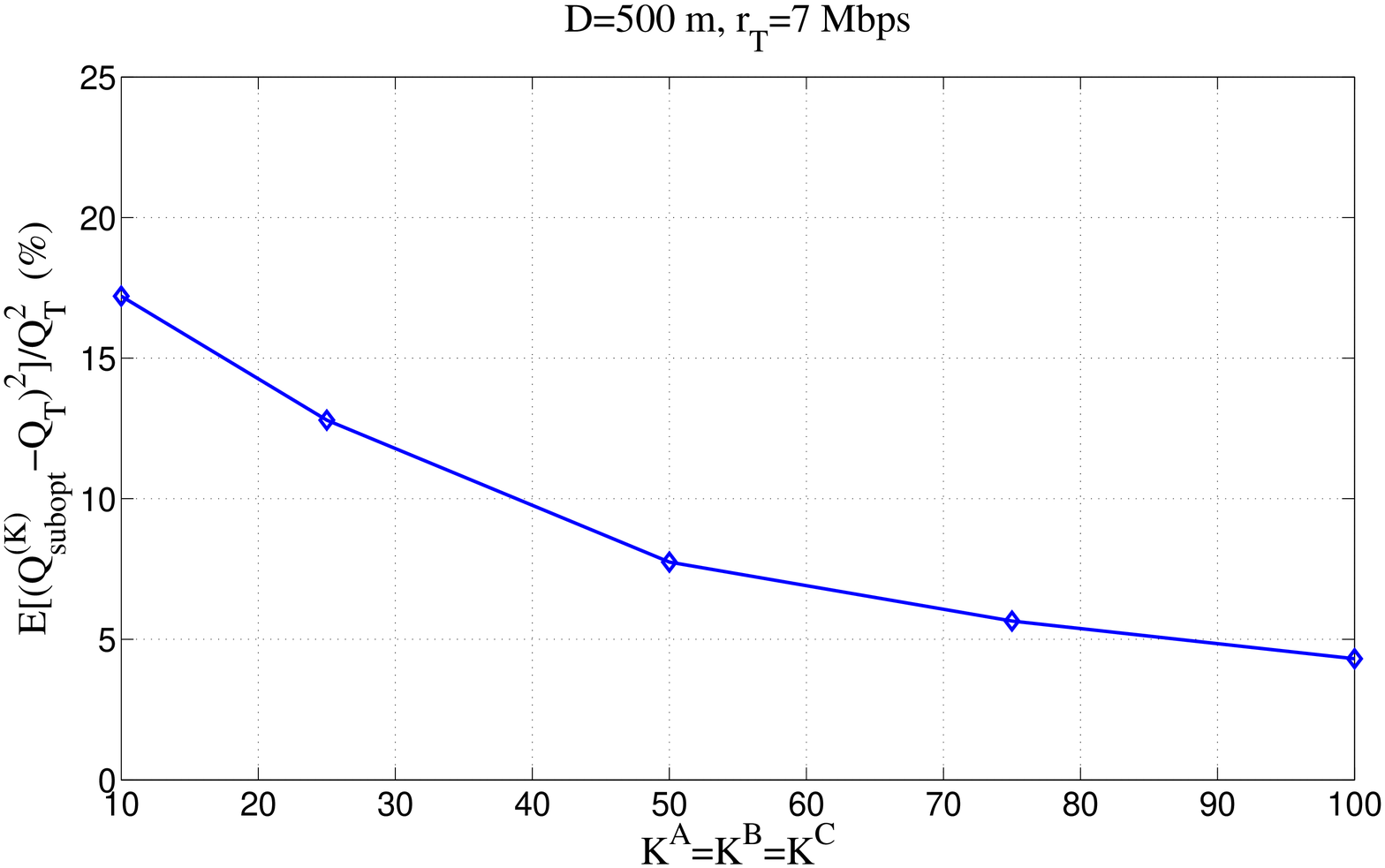}
\caption{$\mathbb{E}\left(Q_{\textrm{subopt}}\K-Q_T\right)^2/Q_T^2$
vs. number of users per sector}
\label{fig:asym_error_2d}
\end{figure}

\begin{figure}[h]
\centering
\includegraphics[width=12cm]{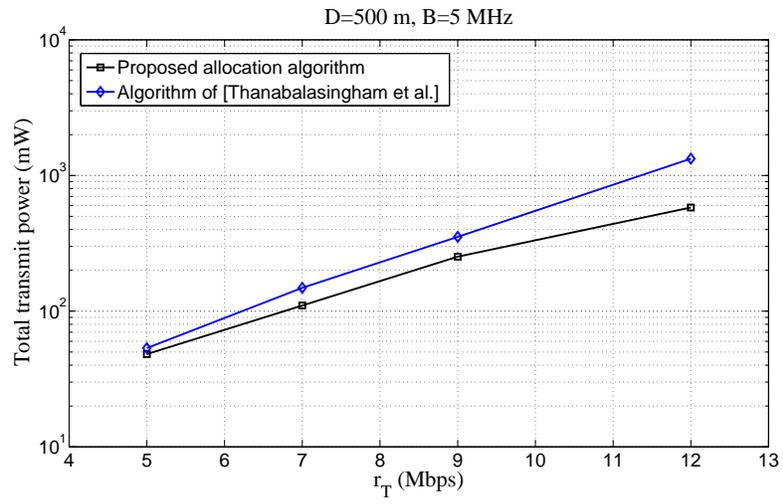}
\caption{Comparison between the proposed allocation algorithm and the
scheme of~\cite{papandriopoulos} for $K^A=K^B=K^C=25$}
\label{fig:sub_opt_2d}
\end{figure}

\begin{figure}[h]
\centering
\includegraphics[width=12cm]{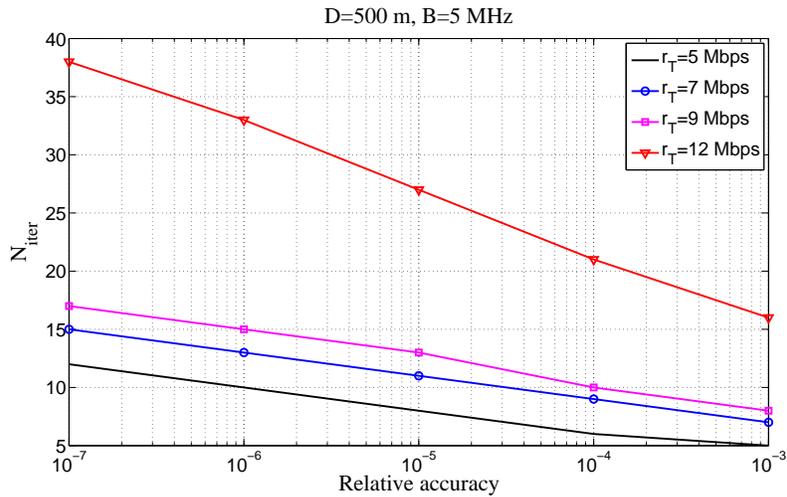}
\caption{Number of iterations of Algorithm~\ref{algo:ping_pong_2d} vs. relative
accuracy}
\label{fig:convergence}
\end{figure}
\begin{figure}[h]
\centering
\includegraphics[width=12cm]{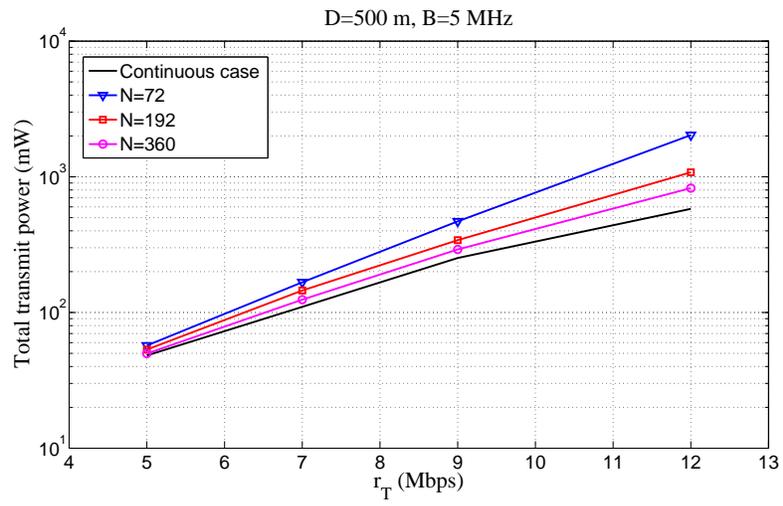}
\caption{Transmit power of the proposed algorithm in case the sharing factors
are integer multiples of $1/N$}
\label{fig:discrete}
\end{figure}

\begin{figure}[h]
\centering
\includegraphics[width=8cm]{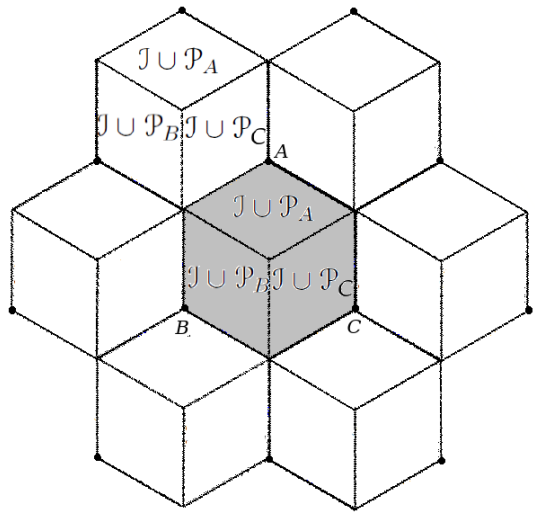}
\caption{21-sector system model and the frequency reuse scheme}
\label{fig:12_BS}
\end{figure}

\begin{figure}[h]
\centering
\includegraphics[width=12cm]{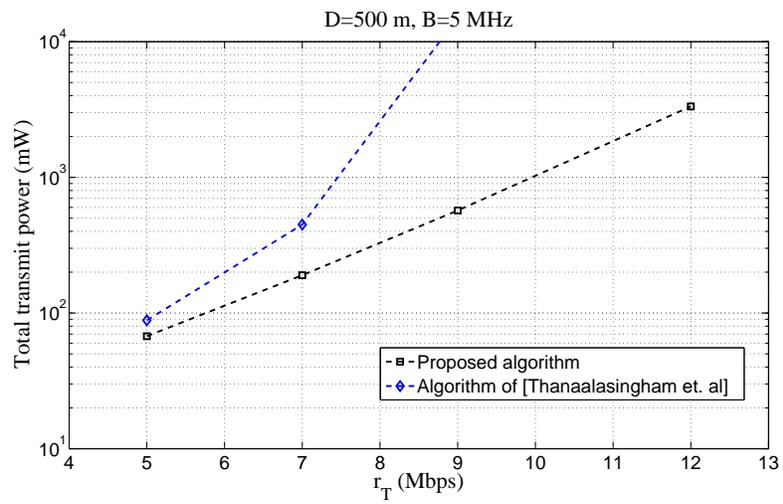}
\caption{Comparison between the proposed allocation algorithm and the
scheme of~\cite{papandriopoulos} in the case of 21 sectors for $K^c=25$}
\label{fig:sub_opt_12}
\end{figure}


\end{document}